\documentclass[a4paper,10pt]{article}
\usepackage[utf8]{inputenc}
\usepackage{amssymb,amsthm,amsmath}

\usepackage[sort]{natbib}

\newtheorem{theorem}{Theorem}[section]
\newtheorem{defi}[theorem]{Definition}
\newtheorem{lemma}[theorem]{Lemma}
\newtheorem{pro}[theorem]{Proposition}
\newtheorem{cor}[theorem]{Corollary}
\newtheorem{re}[theorem]{Remark}

\begin{document}
\bibliographystyle{apalike}

\title{On optimal investment with processes of long or negative memory}
\author{Huy N. Chau\thanks{Alfr\'ed R\'enyi Institute of Mathematics, Hungarian Academy of Sciences, Budapest.
The authors were supported by the ``Lend\"ulet'' grant LP2015-6 of the Hungarian Academy of Sciences.
We thank Gordan \v{Z}itkovi\'c for helpful discussions related to this paper.}
\and Mikl\'os R\'asonyi\footnotemark[1]}

\maketitle

\begin{abstract} We consider the problem of utility maximization for investors with power utility functions.
Building on the earlier work \cite{larsen2014expansion}, 
we prove that the value of the problem
is a Fr\'echet-differentiable function of the drift of the price process, provided that this drift lies
in a suitable Banach space. 

We then study optimal investment problems with non-Markovian driving
processes. In such models there is no hope to get a formula for the achievable maximal utility.
Applying results of the first part of the paper we provide first order expansions for certain problems involving 
fractional Brownian motion either in the drift or in the volatility.
We also point out how asymptotic results can be derived for models with strong mean reversion.
\end{abstract}

%Keywords: utility optimization, non-Markovian processes, fractional Brownian motion, memory, expansion.  

\section{Introduction}

Portfolio optimization is one of the most important questions in mathematical finance, however, explicit solutions are obtained occasionally in some concrete and simple market models. As a tool for solving optimization problems, the dynamic programming principle shows its power in the Markovian realm. Nevertheless, the non-Markovian world exhibits significant difficulties which require different approaches. 

Recent literature on optimal investment has paid considerable attention to approximate solutions. 
More precisely, in many cases where no closed-form result can be found, series expansions could be provided for both the value function and the
parameters of the optimal strategy, see e.g. \cite{walter1,walter2,walter3,weber}. In other words, those difficult optimization problems
were considered as a perturbation of some well-understood benchmark in an appropriate manner. In the present article we wish to take
a similar, ``perturbation'' view on optimal investment in certain classes of non-Markovian market models.

Our starting point is the main result in \cite{larsen2014expansion}. In 
an incomplete market, where the stock price dynamics are given by a continuous semimartingale, these authors 
prove that a power investor's value function is G\^ateaux differentiable with respect to the market price of risk. 
We push this idea further by proving that the value function also has a Fr\'echet derivative on suitable domains. 
Since the space of market prices of risk is infinite dimensional, the 
discrepancy between these two concepts of differentiability is significant, see Remark \ref{arci} below. 

We then apply this new result to some particular market models exhibiting long memory or antipersistence (negative memory). 
Our examples will be driven by a fractional Brownian motion (fBm) whose memory features will be characterized by its Hurst parameter $H \in (0,1)$. 
It is well-known that fBms have long memory for $H>1/2$ and negative memory for $H<1/2$, see Chapter 3 of \cite{giraitis} for
a detailed discussion of these concepts, especially Definitions 3.1.2 and 3.1.3.

One of the principal messages of our paper is that the value function of a power utility maximizer is differentiable with respect to 
the Hurst parameter, with an explicitly given derivative.
This helps to deepen our knowledge on the effect of memory in optimization. Some further applications will also be mentioned.

The paper is organized as follows. In Section \ref{sec_setting}, we introduce the market model and recall the pertinent 
terminology. In Section \ref{sec_frechet}, we extend the result of \cite{larsen2014expansion} by proving that the value function is also Fr\'echet differentiable. 
Section \ref{sec_fBm} introduces some financial markets driven by fractional Brownian motion.  
Section \ref{sec_further} treats models with strongly mean-reverting drifts. 
Section \ref{conci} comments on possible future work.
Section \ref{sec_appen} provides some necessary computations with fBms. 

\section{The model}\label{sec_setting}

Let $(\Omega, \mathcal{F},\mathbb{P})$ be a probability space, and $\mathbb{F} := (\mathcal{F}_t)_{t \in [0,T]}$ be a filtration satisfying the usual conditions. 
Let $\mathbb{L}^0$ denote the set of (a.s. equivalence classes of) scalar random variables, endowed with the topology of convergence in probability.
The symbol $\mathbb{L}^p$ refers to the usual Banach space of $p$-integrable random variables.

For a continuous $\mathbb{F}$-local martingale $M$, we denote
$$
\mathcal{P}^2_M := \left\lbrace  (\pi_t)_{t\in [0,T]}:\ \pi\text{ is }\mathbb{F}-\text{progressively measurable},\ 
\int\limits_0^T {\pi^2_td \left\langle M \right\rangle _t} < \infty, \mathbb{P}-a.s. \right\rbrace,
$$
where $\left\langle M \right\rangle$ denotes the quadratic variation of the local martingale $M$.
For each $\lambda \in \mathcal{P}^2_M$, we define a continuous semimartingale $S^{\lambda} = \mathcal{E}(R^{\lambda})$ where 
$$R^{\lambda}_t= \int\limits_0^t {\lambda_s d\left\langle M\right\rangle}_s + M_t,$$
and $\mathcal{E}$ denotes the stochastic exponential.
The process $R^{\lambda}$ is interpreted as the return of the process $S^{\lambda}$ and the process $\lambda$ is called the market price of risk in the literature.
The financial market consists of the risky asset $S^{\lambda}$ and a zero-interest bond. Given $x>0$, we denote by $\mathcal{X}^{\lambda}(x)$ the set of all 
nonnegative wealth processes starting from initial capital $x$, i.e. 
$$\mathcal{X}^{\lambda}(x):=\left\lbrace  x \mathcal{E}\left( \int\limits_0^{\cdot} {\pi_t dR^{\lambda}_t}\right): \pi \in \mathcal{P}^2_M \right\rbrace.$$
\textbf{The optimization problem.} The investor's preferences are modeled by a power utility function
$$U(x):=\frac{x^p}{p},\qquad x >0,$$
where $p<0$ is the risk-aversion parameter. For $x>0$, we consider the optimal investment problem
\begin{equation}\label{primal}
u^{\lambda}(x):=\max_{X \in \mathcal{X}^{\lambda}(x)} \mathbb{E}U \left( X_T \right).
\end{equation}
\textbf{The dual optimization problem.} For $\lambda \in \mathcal{P}$, the stochastic exponential $Z^{\lambda}$ given by
$$Z^{\lambda}_t := \mathcal{E}\left( - \lambda \cdot M \right)_t, \qquad t \in [0,T],$$
is a strictly positive local martingale. This process figures prominently in the study of the dual domain, which  is described now. 
For $y>0$, we define 
\begin{align*}
\mathcal{Y}^{\lambda}(y) := \bigl\{ Y >0 :\;& Y \text{ is a c\`adl\`ag supermartingale such that } Y_0 =y \text{ and } \\
& YX \text{ is a supermartingale for every } X \in \mathcal{X}^{\lambda}(1) \bigr\}.
\end{align*}
The conjugate function $V: (0,\infty) \to \mathbb{R}$ is defined by
\begin{equation}
V(y):=\sup_{x>0}(U(x)-xy)= \frac{y^{-q}}{q}, \qquad  \text{where }q:=\frac{p}{1-p} \in (-1,0).
\end{equation}
The dual value function is defined as follows
\begin{equation}\label{dual}
v^{\lambda}(y):= \inf_{Y \in \mathcal{Y}^{\lambda}(y)} \mathbb{E}[V(Y_T)], \qquad y > 0.
\end{equation}
It is shown in Corollary 3.3 of \cite{larsen2007stability} that if $v^{\lambda}(y)$ is finite then the minimizer  $\widehat{Y}^{\lambda}$ for the dual problem always has the form 
$$\widehat{Y}^{\lambda} = Z^{\lambda}\widehat{H}^{\lambda}, \qquad \text{for some } \widehat{H} \in \mathcal{H},$$
where $\mathcal{H}$ denotes the set of all positive local martingales $H$ with $H_0= 1$ that are strongly orthogonal to $M$.
\begin{re}[No arbitrage conditions] 
{\rm The requirement that $\lambda \in \mathcal{P}^2_M$ is not enough for the validity of the NFLVR condition, see \cite{ds},
since a candidate for a local martingale density can easily be a strict local martingale.  However, this is the minimal condition needed to ensure 
that the optimization problem is well-posed. Indeed, the process $Z^{\lambda}$ plays the role of a supermartingale deflator, 
in the terminology of \cite{karatzas2007numeraire}, and its presence implies that the condition NUPBR of that paper holds true. For recent duality results under the condition NUPBR, we refer to \cite{chau2015optimal}.} 
\end{re}

\section{Fr\'echet differentiability}\label{sec_frechet}

%To begin with, we recall some definitions from \cite{larsen2007stability}.
%\begin{defi}
%A subset $A \subset \mathcal{P}^2_M $ is said to be $V$-relatively compact if the family of
%random variables $\{ V(Z^{\lambda}_T): \lambda \in A \}$ is uniformly integrable.
%\end{defi}
%The $V$-relative compactness property is crucial for the stability of the primal value function with respect to market price of risk, see Theorem 2.12 of \cite{larsen2007stability}. In our case, with power utility functions, we easily see that the $V$-relative compactness property is automatically satisfied.

\cite{larsen2014expansion} prove that the value function $u^{\lambda}$ is G\^ateaux differentiable with respect to $\lambda$ on a certain, suitable domain. 
We will show that even a Fr\'echet derivative can be found in a smaller domain, which requires stronger regularity of the market price of risk.

Let $\alpha >1$ be a real number such that $(-q )\alpha < 1$ and let $\beta > 1$ be the H\"older conjugate of $\alpha,$ i.e. $\frac{1}{\alpha} + \frac{1}{\beta} = 1$. In short, let $\beta>1$ be such that $\frac{-q\beta}{\beta-1}<1$. For such $\beta$
we define 
\begin{equation}\label{norm_D}
\left\| \lambda \right\|_{\beta}:=  \mathbb{E}\left[ \left( \int\limits_0^T {(\lambda_t)^2d\left\langle M \right\rangle _t}\right) ^{\beta}  \right]^{\frac{1}{2\beta}}, 
\end{equation}
and set $\mathcal{D}_{\beta} := \left\lbrace  \lambda \in \mathcal{P}^2_M: \left\| \lambda \right\|_{\beta} < +\infty\right\rbrace$.
It is easily checked that $\mathcal{D}_{\beta}$ is a Banach space with the given norm.

Following Definition 2.7 of \cite{larsen2007stability}, a metrizable topology $\tau$ on some subset 
$\mathcal{A}\subset\mathcal{P}^2_M$ is said to be appropriate if the mapping $\lambda \mapsto Z^{\lambda}_T$ from $\mathcal{A}$ into $\mathbb{L}^0_+$ is continuous, when $\mathcal{A}$ is endowed with $\tau$, and $\mathbb{L}^0$ with the topology of convergence in probability.
\begin{lemma}\label{nona}
	The norm given in (\ref{norm_D}) induces an appropriate topology on $\mathcal{D}_{\beta}$.
\end{lemma}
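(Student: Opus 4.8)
The plan is to exploit metrizability: since $\|\cdot\|_\beta$ induces a norm topology, which is in particular metrizable, the required continuity of $\lambda\mapsto Z^\lambda_T$ reduces to sequential continuity, namely to showing that $\|\lambda^n-\lambda\|_\beta\to 0$ forces $Z^{\lambda^n}_T\to Z^\lambda_T$ in probability. I would start from the explicit representation $Z^\lambda_T=\exp\left(-\int_0^T\lambda_s\,dM_s-\frac12\int_0^T\lambda_s^2\,d\langle M\rangle_s\right)$. Because convergence in probability is stable under sums and under composition with the continuous function $\exp$, it suffices to prove separately that the stochastic integral $\int_0^T\lambda^n_s\,dM_s$ and the quadratic variation $\int_0^T(\lambda^n_s)^2\,d\langle M\rangle_s$ converge in probability to the corresponding objects built from $\lambda$.

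The norm feeds directly into this. Since $\|\lambda\|_\beta^2=\mathbb{E}\left[\left(\int_0^T\lambda_t^2\,d\langle M\rangle_t\right)^\beta\right]^{1/\beta}$ is the $\mathbb{L}^\beta$-norm of $\int_0^T\lambda_t^2\,d\langle M\rangle_t$, the hypothesis $\|\lambda^n-\lambda\|_\beta\to 0$ says exactly that $\int_0^T(\lambda^n_t-\lambda_t)^2\,d\langle M\rangle_t\to 0$ in $\mathbb{L}^\beta$, hence in probability. For the quadratic-variation term the conclusion is then almost immediate: for each fixed $\omega$ the map $f\mapsto\left(\int_0^T f_t^2\,d\langle M\rangle_t\right)^{1/2}$ is a seminorm, so the triangle inequality gives $\left|\left(\int_0^T(\lambda^n_t)^2\,d\langle M\rangle_t\right)^{1/2}-\left(\int_0^T\lambda_t^2\,d\langle M\rangle_t\right)^{1/2}\right|\le\left(\int_0^T(\lambda^n_t-\lambda_t)^2\,d\langle M\rangle_t\right)^{1/2}$, whose right-hand side tends to $0$ in probability by the previous sentence; squaring preserves this convergence.

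The genuine work is the stochastic-integral term, and this is where I expect the main obstacle to lie, since the norm controls only quadratic variations whereas we must recover convergence of the (merely local) martingales themselves. Writing $N^n:=\int_0^\cdot(\lambda^n_s-\lambda_s)\,dM_s$, a continuous local martingale with $\langle N^n\rangle_T\to 0$ in probability, I would invoke Lenglart's domination inequality applied to the process $X=(N^n)^2$, which is dominated by $A=\langle N^n\rangle$: for every $\epsilon,\eta>0$ one gets $\mathbb{P}\left(\sup_{t\le T}(N^n_t)^2\ge\epsilon\right)\le\frac{\eta}{\epsilon}+\mathbb{P}\left(\langle N^n\rangle_T\ge\eta\right)$. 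Letting $n\to\infty$ kills the second term for each fixed $\eta$, and then sending $\eta\to 0$ yields $\sup_{t\le T}|N^n_t|\to 0$ in probability, in particular $\int_0^T\lambda^n_s\,dM_s\to\int_0^T\lambda_s\,dM_s$ in probability. Combining the two convergences, the exponent converges in probability and continuity of $\exp$ closes the argument. The only remaining points demanding care are the localization justifying the Lenglart domination for a local rather than a true martingale, and the verification that $\lambda^n-\lambda$ stays in $\mathcal{P}^2_M$ so that the quadratic variations above are finite, which holds because $\mathcal{D}_\beta$ is a vector space contained in $\mathcal{P}^2_M$.
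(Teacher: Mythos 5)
Your proof is correct, but it takes a genuinely different route from the paper's. The paper's argument is a two-line reduction: by H\"older's (in effect Jensen's) inequality, convergence in $\|\cdot\|_{\beta}$ implies convergence in the weaker norm $\mathbb{E}\left[\int_0^T \lambda_t^2\, d\langle M\rangle_t\right]^{1/2}$, and the paper then cites Proposition A.1 of \cite{larsen2007stability}, which asserts that this $\mathbb{L}^2$-type norm already induces an appropriate topology; nothing about $Z^{\lambda}$ itself is ever touched. You instead prove the appropriateness from scratch: you unpack $Z^{\lambda}_T$ as an explicit exponential, handle the finite-variation term by the pathwise reverse triangle inequality for the seminorm $f\mapsto\left(\int_0^T f_t^2\, d\langle M\rangle_t\right)^{1/2}$, and handle the local-martingale term via Lenglart's domination inequality applied to $(N^n)^2$ dominated by $\langle N^n\rangle$ (the domination is legitimate even for local martingales, by localization plus Fatou, as you flag). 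In effect you reprove the cited Proposition A.1, and in a slightly stronger form: your argument only uses that $\int_0^T(\lambda^n_t-\lambda_t)^2\, d\langle M\rangle_t\to 0$ \emph{in probability}, which is weaker than the $\mathbb{L}^1$ convergence the paper's H\"older step passes through, so your proof shows that even the topology of convergence in probability of these quadratic variations is appropriate. What each approach buys: the paper's is brevity by citation; yours is self-containedness (modulo Lenglart) plus a marginally more general conclusion, at the cost of doing the heavy lifting that the paper outsources.
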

\begin{proof}
Suppose that $\lambda^n \to \lambda$ in $\mathcal{D}_{\beta}$. H\"older's inequality shows that
$$ \mathbb{E}\left[ \int_0^T {|\lambda^n_t - \lambda_t|^2} d\left\langle M \right\rangle _t \right] \le \mathbb{E}\left[ \left(  \int_0^T {|\lambda^n_t - \lambda_t|^2} d\left\langle M \right\rangle _t \right)^{\beta} \right]^{1/\beta} \to 0.$$
Since the norm $\mathbb{E}\left[ \int_0^T {\lambda_t^2} d\left\langle M \right\rangle _t \right]^{\frac{1}{2}}$ induces an appropriate topology by Proposition A.1 of 
\cite{larsen2007stability}, it holds that $Z^{\lambda^n}_T \to Z^{\lambda}_T$ in probability. Therefore, the norm (\ref{norm_D}) also induces an appropriate topology.
\end{proof}
Now, we recall a result of \cite{larsen2007stability}.

\begin{theorem}\label{continuous}
For any $\lambda \in \mathcal{D}_{\beta}$, the function $u^{\lambda}:(0, \infty) \to \mathbb{R}$ is finite-valued. The mapping $(\lambda, x) \to u^{\lambda}(x)$ is jointly continuous.
\end{theorem}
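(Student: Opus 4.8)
The plan is to treat finiteness and continuity separately, reducing the continuity statement to a single value of $x$ by exploiting the scaling of power utility. First observe that for $X \in \mathcal{X}^{\lambda}(x)$ one may write $X = x X'$ with $X' \in \mathcal{X}^{\lambda}(1)$, so that $U(X_T) = x^p U(X'_T)$; since $x^p > 0$, taking suprema gives the homogeneity relation $u^{\lambda}(x) = x^p u^{\lambda}(1)$. Consequently, once I know that $\lambda \mapsto u^{\lambda}(1)$ is finite and continuous on $\mathcal{D}_{\beta}$, joint continuity of $(\lambda, x) \mapsto u^{\lambda}(x) = x^p u^{\lambda}(1)$ follows at once from the continuity of $x \mapsto x^p$ on $(0,\infty)$. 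This is the structural simplification I would use throughout.

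For finiteness, note that $p < 0$ forces $U < 0$, hence $\mathbb{E} U(X_T) < 0$ for every admissible $X$ and therefore $u^{\lambda}(x) \le 0$. For a matching lower bound I take the pure-bond strategy $\pi \equiv 0$, which yields $X_T = x$ and $\mathbb{E} U(X_T) = U(x) = x^p/p > -\infty$. Thus $U(x) \le u^{\lambda}(x) \le 0$, so $u^{\lambda}(x)$ is finite for every $\lambda \in \mathcal{P}^2_M$ and every $x > 0$. The integrability hypothesis $(-q)\alpha < 1$ plays no role here; it enters only in the continuity argument.

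For continuity in $\lambda$ I would pass to the dual problem. Let $\lambda^n \to \lambda$ in $\mathcal{D}_{\beta}$. By Lemma \ref{nona} the norm induces an appropriate topology, so $Z^{\lambda^n}_T \to Z^{\lambda}_T$ in probability. The role of $(-q)\alpha < 1$ is to supply uniform integrability: writing $\gamma := (-q)\alpha \in (0,1)$ and using the identity $(Z^{\lambda^n}_T)^{\gamma} = \mathcal{E}(-\gamma \lambda^n \cdot M)_T \exp\bigl(-\tfrac{\gamma(1-\gamma)}{2}\int_0^T (\lambda^n_t)^2 d\langle M\rangle_t\bigr) \le \mathcal{E}(-\gamma \lambda^n \cdot M)_T$ together with the supermartingale property of the stochastic exponential, I obtain $\mathbb{E}[(Z^{\lambda^n}_T)^{-q\alpha}] \le 1$. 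Hence the family $\{(Z^{\lambda^n}_T)^{-q}\}_n$ is bounded in $\mathbb{L}^{\alpha}$ with $\alpha > 1$ and is therefore uniformly integrable; combined with convergence in probability this yields $\mathbb{E}[(Z^{\lambda^n}_T)^{-q}] \to \mathbb{E}[(Z^{\lambda}_T)^{-q}]$ and controls the dual functionals along the sequence. On the strength of these two ingredients — the appropriate topology and the uniform $\mathbb{L}^{\alpha}$ bound — I would invoke the stability result of \cite{larsen2007stability} to conclude that $v^{\lambda^n}(y) \to v^{\lambda}(y)$ and hence, through the conjugacy $u^{\lambda}(1) = \inf_{y>0}(v^{\lambda}(y) + y)$, that $u^{\lambda^n}(1) \to u^{\lambda}(1)$.

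The main obstacle is precisely the incompleteness of the market: the dual minimizer has the form $\widehat{Y}^{\lambda} = Z^{\lambda} \widehat{H}^{\lambda}$ with a nontrivial orthogonal local martingale $\widehat{H}^{\lambda}$ that varies with $\lambda$, so the elementary bound above controls only the benchmark density $Z^{\lambda}$ and not the optimizers themselves. Establishing both the lower- and upper-semicontinuity inequalities for $v^{\lambda}$ therefore requires transporting near-optimal dual elements along the sequence and passing to the limit, and it is exactly here that convergence in probability of $Z^{\lambda^n}_T$ together with the uniform integrability furnished by $(-q)\alpha < 1$ does the work. I expect this semicontinuity sandwiching, rather than the elementary finiteness or the scaling reduction, to be the crux of the argument.
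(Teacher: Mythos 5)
Your proposal is correct and takes essentially the same route as the paper: both rest the continuity claim on the stability theorem (Theorem 2.12) of \cite{larsen2007stability}, with Lemma \ref{nona} supplying the appropriateness of the topology, and your scaling reduction $u^{\lambda}(x)=x^{p}u^{\lambda}(1)$ together with the elementary finiteness bounds $x^{p}/p\le u^{\lambda}(x)\le 0$ are valid (if more explicit than the paper's one-line citation). One small correction of emphasis: your uniform-integrability computation $\mathbb{E}\left[(Z^{\lambda^{n}}_{T})^{-q\alpha}\right]\le 1$, though correct, is not actually needed for this theorem — since $p<0$ forces $V(y)=y^{-q}/q<0$, the $V$-relative compactness hypothesis of \cite{larsen2007stability} holds trivially (the positive part $V^{+}$ vanishes identically), which is precisely the paper's observation; the condition $(-q)\alpha<1$ earns its keep only later, in the uniform-integrability step of the proof of Theorem \ref{frechet}.
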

\begin{proof} This follows from Theorem 2.12 of \cite{larsen2007stability} noting that
the set $\mathcal{D}_{\beta}$ is trivially $V$-relatively compact in the present case (see Definition 2.10 of 
\cite{larsen2007stability}) and the topology induced by (\ref{norm_D}) is appropriate,
see  Lemma \ref{nona} above. 
\end{proof}

For simplicity, we also write $u^{\lambda}:=u^{\lambda}(1)$ and $v^{\lambda}:=v^{\lambda}(1)$.
Furthermore, \cite{larsen2014expansion} prove that the value function is not only continuous but also G\^ateaux differentiable with respect to the market price of risk.
We recall their Theorem 3.1.

\begin{theorem}[The G\^ateaux derivative]\label{gateaux} Let $\lambda, \lambda' \in \mathcal{P}^2_M$ such that
\begin{equation}\label{mak}
\int\limits_0^T{ (\lambda'_t)^2}d\left\langle M \right\rangle_t \in \mathbb{L}^{1-p} \qquad \text{and} \qquad \int\limits_0^T {\lambda'_t}dR^{\lambda}_t  \in \cup_{s > 1-p} \mathbb{L}^s.
\end{equation}
Then we have 
\begin{align*} 
\left. \frac{d}{d\varepsilon}u^{^{\lambda + \varepsilon \lambda'}}\right|_{\varepsilon = 0}:= \lim_{\varepsilon \to 0} \frac{1}{\varepsilon}\left( u^{\lambda + \varepsilon \lambda'} - u^{\lambda} \right) = pu^{\lambda} \mathbb{E}\left[ \frac{d\tilde{\mathbb{P}}^{\lambda}}{d\mathbb{P}}\int\limits_0^T {\lambda_t'dR^{\lambda}_t} \right], \\
\left. \frac{d}{d\varepsilon}v^{\lambda + \varepsilon \lambda'}\right|_{\varepsilon = 0}:= \lim_{\varepsilon \to 0} \frac{1}{\varepsilon}\left( v^{\lambda + \varepsilon \lambda'} - v^{\lambda} \right) = qv^{\lambda} \mathbb{E}\left[ \frac{d\tilde{\mathbb{P}}^{\lambda}}{d\mathbb{P}}\int\limits_0^T {\lambda_t'dR^{\lambda}_t} \right],
\end{align*}
where the probability measure $\tilde{\mathbb{P}}^{\lambda}$ is defined by $\frac{d\tilde{\mathbb{P}}^{\lambda}}{d\mathbb{P}} = \frac{1}{u^{\lambda}}U(\widehat{X}^{\lambda}_T) = \frac{1}{v^{\lambda}} V(\widehat{Y}^{\lambda}_T),$ and $\widehat{X}^{\lambda}_T, \widehat{Y}^{\lambda}_T$ are the solutions of (\ref{primal}) and (\ref{dual}), respectively.
\end{theorem}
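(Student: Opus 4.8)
The plan is to obtain the directional derivative by squeezing $\varepsilon\mapsto u^{\lambda+\varepsilon\lambda'}$ between two explicit families of values built from the \emph{base} optimizers at $\lambda$, each smooth in $\varepsilon$, agreeing with $u^\lambda$ at $\varepsilon=0$; the content is then that their $\varepsilon$-derivatives at $0$ coincide. Two perturbation identities drive everything. Since $dR^{\lambda+\varepsilon\lambda'}=dR^\lambda+\varepsilon\lambda'\,d\langle M\rangle$ differs from $dR^\lambda$ only by a finite-variation term, for any fixed strategy $\pi$ one has $\mathcal E(\int\pi\,dR^{\lambda+\varepsilon\lambda'})=\mathcal E(\int\pi\,dR^\lambda)\exp(\varepsilon\int\pi\lambda'\,d\langle M\rangle)$. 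Yor's formula gives $Z^{\lambda+\varepsilon\lambda'}=Z^\lambda\,\mathcal E(-\varepsilon\lambda'\cdot M)\exp(-\varepsilon\int\lambda\lambda'\,d\langle M\rangle)$, whence
\[
\left.\frac{d}{d\varepsilon}\log Z^{\lambda+\varepsilon\lambda'}_T\right|_{\varepsilon=0}=-\int_0^T\lambda'_t\,dM_t-\int_0^T\lambda_t\lambda'_t\,d\langle M\rangle_t=-\int_0^T\lambda'_t\,dR^\lambda_t,
\]
which is exactly the integral in the statement.

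For the lower bound I freeze the primal optimizer: let $\widehat\pi\in\mathcal P^2_M$ satisfy $\widehat X^\lambda=\mathcal E(\int\widehat\pi\,dR^\lambda)$, and use $\widehat\pi$ (still admissible, as $\mathcal P^2_M$ does not depend on $\lambda$) in the perturbed market. Then $u^{\lambda+\varepsilon\lambda'}\ge\Phi(\varepsilon):=\mathbb E[U(\widehat X^\lambda_T\exp(\varepsilon\int_0^T\widehat\pi\lambda'\,d\langle M\rangle))]$ with $\Phi(0)=u^\lambda$. Differentiating under the expectation and using $(\widehat X^\lambda_T)^p=p\,U(\widehat X^\lambda_T)=pu^\lambda\,\frac{d\tilde{\mathbb P}^\lambda}{d\mathbb P}$ yields $\Phi'(0)=pu^\lambda\,\mathbb E[\frac{d\tilde{\mathbb P}^\lambda}{d\mathbb P}\int_0^T\widehat\pi\lambda'\,d\langle M\rangle]$. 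To recognize the claimed form I pass to $\tilde{\mathbb P}^\lambda$, whose density process is the $\mathbb P$-martingale $\widehat X^\lambda\widehat Y^\lambda$ (here the conjugacy relation $(\widehat X^\lambda_T)^{p-1}=pu^\lambda\,\widehat Y^\lambda_T$ identifies $\frac{d\tilde{\mathbb P}^\lambda}{d\mathbb P}=\widehat X^\lambda_T\widehat Y^\lambda_T$): a drift computation shows $\int_0^\cdot\lambda'(dR^\lambda-\widehat\pi\,d\langle M\rangle)$ is a $\tilde{\mathbb P}^\lambda$-local martingale, the finite-variation part from the drift of $\widehat X^\lambda$ cancelling the covariation with $\widehat X^\lambda\widehat Y^\lambda$. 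Hence $\Phi'(0)=pu^\lambda\,\mathbb E[\frac{d\tilde{\mathbb P}^\lambda}{d\mathbb P}\int_0^T\lambda'\,dR^\lambda]$, and since $\Phi$ lies below $u^{\lambda+\varepsilon\lambda'}$ and touches it at $\varepsilon=0$, this controls the lower-right and upper-left Dini derivatives of $\varepsilon\mapsto u^{\lambda+\varepsilon\lambda'}$.

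For the matching upper bound I freeze the dual optimizer $\widehat Y^\lambda=Z^\lambda\widehat H^\lambda$ and transport it to $Z^{\lambda+\varepsilon\lambda'}\widehat H^\lambda$, again dual-feasible since orthogonality to $M$ is unaffected. Weak duality with $V(y\,\cdot)=y^{-q}V(\cdot)$ gives $u^{\lambda+\varepsilon\lambda'}\le\Theta(\varepsilon):=\inf_{y>0}\{y^{-q}\Psi(\varepsilon)+y\}$, where $\Psi(\varepsilon):=\mathbb E[V(Z^{\lambda+\varepsilon\lambda'}_T\widehat H^\lambda_T)]$. Because $\widehat H^\lambda$ is dual-\emph{optimal} at $\varepsilon=0$ we have $\Psi(0)=v^\lambda$ and strong duality makes the bound tight, $\Theta(0)=u^\lambda$. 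The logarithmic-derivative computation above gives $\Psi'(0)=qv^\lambda\,\mathbb E[\frac{d\tilde{\mathbb P}^\lambda}{d\mathbb P}\int_0^T\lambda'\,dR^\lambda]$, and the envelope theorem gives $\Theta'(0)=y_*^{-q}\Psi'(0)$ with $y_*$ the minimizer at $\varepsilon=0$; a short computation with the power-utility constants (using $q(1-p)=p$ and the resulting $pu^\lambda=(qv^\lambda)^{1-p}$) shows $y_*^{-q}\,qv^\lambda=pu^\lambda$, so $\Theta'(0)=\Phi'(0)$. Squeezing $\Phi\le u^{\lambda+\varepsilon\lambda'}\le\Theta$ forces the two-sided derivative to exist and equal the common value, proving the primal formula; the dual formula follows from the symmetric squeeze (dual upper bound from the frozen $\widehat H^\lambda$, dual lower bound from weak duality against the frozen $\widehat\pi$), which produces the factor $qv^\lambda$.

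The main obstacle is analytic rather than structural: justifying the interchange of $\frac{d}{d\varepsilon}$ with $\mathbb E[\cdot]$ in $\Phi'(0)$ and $\Psi'(0)$, and the uniform integrability of the difference quotients near $\varepsilon=0$. This is exactly where the hypotheses \eqref{mak} enter: $\int_0^T(\lambda'_t)^2\,d\langle M\rangle_t\in\mathbb L^{1-p}$ controls the exponential factors $\exp(p\varepsilon\int\widehat\pi\lambda'\,d\langle M\rangle)$ and $(Z^{\lambda+\varepsilon\lambda'}_T)^{-q}$ via H\"older and de la Vall\'ee--Poussin type estimates, while $\int_0^T\lambda'_t\,dR^\lambda_t\in\cup_{s>1-p}\mathbb L^s$ guarantees integrability of the formal derivative against $\frac{d\tilde{\mathbb P}^\lambda}{d\mathbb P}$ and supplies the extra room needed to dominate the quotients uniformly for small $\varepsilon$. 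Verifying these domination estimates, together with the $\tilde{\mathbb P}^\lambda$-martingale and dual-feasibility claims used above, is the technical heart of the argument.
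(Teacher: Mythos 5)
You should first be aware that the paper contains no proof of Theorem \ref{gateaux}: it is quoted verbatim (``We recall their Theorem 3.1'') from \cite{larsen2014expansion}, so the only meaningful comparison is with the proof in that reference. Your sandwich architecture is essentially that proof's: a primal lower bound $\Phi$ from freezing $\widehat{\pi}$, a dual upper bound $\Theta$ from transporting $\widehat{H}^{\lambda}$ to $Z^{\lambda+\varepsilon\lambda'}\widehat{H}^{\lambda}$ (which is indeed dual-feasible), the Yor factorization giving $\frac{d}{d\varepsilon}\log Z^{\lambda+\varepsilon\lambda'}_T\big|_{\varepsilon=0}=-\int_0^T\lambda'_t\,dR^{\lambda}_t$, and conjugacy to match constants. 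Your algebra is correct: with $\frac{d\tilde{\mathbb{P}}^{\lambda}}{d\mathbb{P}}=\widehat{X}^{\lambda}_T\widehat{Y}^{\lambda}_T$ the Girsanov drift of $M$ under $\tilde{\mathbb{P}}^{\lambda}$ is $(\widehat{\pi}_t-\lambda_t)\,d\langle M\rangle_t$, so $\int_0^{\cdot}\lambda'_t\,(dR^{\lambda}_t-\widehat{\pi}_t\,d\langle M\rangle_t)$ is a $\tilde{\mathbb{P}}^{\lambda}$-local martingale as you claim, and the relations $y_*=pu^{\lambda}$, $y_*^{1+q}=qv^{\lambda}$ do yield $y_*^{-q}\,qv^{\lambda}=pu^{\lambda}$, hence $\Theta'(0)=\Phi'(0)$.

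The genuine gap is precisely the part you defer at the end, and the route you sketch for closing it would not work as stated. The primal difference quotient is $U(\widehat{X}^{\lambda}_T)\,\varepsilon^{-1}\left(e^{p\varepsilon A}-1\right)$ with $A=\int_0^T\widehat{\pi}_t\lambda'_t\,d\langle M\rangle_t$, and the hypotheses \eqref{mak} give no control whatsoever on $\widehat{\pi}$: the H\"older/de la Vall\'ee--Poussin domination of the exponential factor $e^{p\varepsilon A}$ that you invoke is unavailable, since $A$ need not possess exponential (or even high polynomial) moments under $\mathbb{P}$; a priori you only learn $A\in\mathbb{L}^1(\tilde{\mathbb{P}}^{\lambda})$ from the very martingale identity you wish to exploit. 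The standard repair is structural rather than by domination: $\varepsilon\mapsto U(\widehat{X}^{\lambda}_T)e^{p\varepsilon A}$ is pathwise concave and the dual integrand pathwise convex, so the difference quotients are monotone in $\varepsilon$ and monotone convergence replaces the interchange of $\frac{d}{d\varepsilon}$ and $\mathbb{E}$; moreover, one-sided derivatives of $\Phi$ and $\Theta$ suffice for the squeeze. Likewise, upgrading the $\tilde{\mathbb{P}}^{\lambda}$-local martingale to a true one, and giving meaning to $\mathbb{E}\left[\frac{d\tilde{\mathbb{P}}^{\lambda}}{d\mathbb{P}}\int_0^T\lambda'_t\,dR^{\lambda}_t\right]$, requires exact exponent bookkeeping rather than a gesture: writing $\frac{d\tilde{\mathbb{P}}^{\lambda}}{d\mathbb{P}}=(qv^{\lambda})^{-1}(\widehat{Y}^{\lambda}_T)^{-q}$ and applying H\"older with $s>1-p$ and its conjugate $s'$, one has $-qs'<1$ exactly because $s>1-p$, whence $\mathbb{E}[(\widehat{Y}^{\lambda}_T)^{-qs'}]\le(\mathbb{E}[\widehat{Y}^{\lambda}_T])^{-qs'}\le 1$ by Jensen; the first condition in \eqref{mak} then controls the quadratic terms (via Burkholder--Davis--Gundy) in the $o(\varepsilon)$ remainders. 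Until these estimates are executed, your proposal is a correct and well-aligned skeleton whose analytic core --- the entire reason the hypotheses \eqref{mak} take the form they do --- remains unproven.
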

By the choice of $\beta$, we see that $1-p < \beta$, therefore any $\lambda' \in \mathcal{D}_{\beta}$ satisfies also the integrability conditions 
\eqref{mak} in Theorem \ref{gateaux}. Now, we push the idea of \cite{larsen2014expansion} further by proving that the value function is also Fr\'echet differentiable.
\begin{theorem}[The Fr\'echet derivative]\label{frechet} Let $\frac{-q\beta}{\beta-1}<1$ hold. Then
the mapping $\lambda \mapsto u^{\lambda}(x)$ from $\mathcal{D}_{\beta}$ to $\mathbb{R}$ is Fr\'echet differentiable.
\end{theorem}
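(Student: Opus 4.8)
The plan is to upgrade the Gâteaux derivative of Theorem \ref{gateaux} to a Fréchet derivative via the classical principle that a Gâteaux-differentiable map whose Gâteaux derivative depends continuously, in the operator norm, on the base point is automatically Fréchet differentiable. For fixed $\lambda \in \mathcal{D}_\beta$ the candidate derivative is the linear functional
$$L_\lambda(\lambda') := p u^\lambda \, \mathbb{E}\left[ \frac{d\tilde{\mathbb{P}}^\lambda}{d\mathbb{P}} \int_0^T \lambda'_t \, dR^\lambda_t \right], \qquad \lambda' \in \mathcal{D}_\beta,$$
read off from Theorem \ref{gateaux}. Since $1-p < \beta$ by the choice of $\beta$, every $\lambda' \in \mathcal{D}_\beta$ satisfies the integrability requirements \eqref{mak}, so $L_\lambda$ is well defined on all of $\mathcal{D}_\beta$ and coincides with the directional derivative at $\lambda$ in every direction.

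The first step is to check that $L_\lambda$ lies in the dual $\mathcal{D}_\beta^*$, i.e. $|L_\lambda(\lambda')| \le C_\lambda \|\lambda'\|_\beta$. I would split $\int_0^T \lambda'_t \, dR^\lambda_t = \int_0^T \lambda'_t \lambda_t \, d\langle M\rangle_t + \int_0^T \lambda'_t \, dM_t$ and apply the Cauchy--Schwarz inequality pathwise to the finite-variation term, which isolates the factor $\bigl(\int_0^T (\lambda'_t)^2 \, d\langle M\rangle_t\bigr)^{1/2}$ whose $\mathbb{L}^{2\beta}$-norm is exactly $\|\lambda'\|_\beta$. A Hölder inequality in expectation with conjugate exponent $\tfrac{2\beta}{2\beta-1}$ then reduces the estimate to finiteness of suitable moments of the density $\frac{d\tilde{\mathbb{P}}^\lambda}{d\mathbb{P}} = \frac{1}{pu^\lambda}(\widehat{X}^\lambda_T)^p$ together with $\lambda \in \mathcal{D}_\beta$; the martingale term is handled in the same spirit after passing to $\tilde{\mathbb{P}}^\lambda$ (under which $\langle M\rangle$ is unchanged) and invoking a Burkholder--Davis--Gundy estimate. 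The standing hypothesis $\frac{-q\beta}{\beta-1}<1$ is precisely what makes the relevant moments of $(\widehat{X}^\lambda_T)^p$ finite, so all these inequalities close.

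The second, and most delicate, step is to show that $\lambda \mapsto L_\lambda$ is continuous from $\mathcal{D}_\beta$ into $\mathcal{D}_\beta^*$. Given $\lambda^n \to \lambda$ in $\mathcal{D}_\beta$, I must bound $\sup_{\|\lambda'\|_\beta \le 1} |L_{\lambda^n}(\lambda') - L_\lambda(\lambda')|$. After absorbing the factor $u^{\lambda^n}\to u^\lambda$ using the joint continuity of Theorem \ref{continuous}, the remaining difference telescopes, writing $D^n := \frac{d\tilde{\mathbb{P}}^{\lambda^n}}{d\mathbb{P}}$ and $D := \frac{d\tilde{\mathbb{P}}^{\lambda}}{d\mathbb{P}}$, into a term controlled by $\|\lambda^n-\lambda\|_\beta$ (via the same Hölder splitting as above) and a term requiring $D^n \to D$. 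Establishing this convergence in a strong enough sense --- namely in $\mathbb{L}^s$ for the exponent dictated by Step one, not merely in probability --- is the crux of the proof. I expect this to be the main obstacle: it rests on stability of the optimal terminal wealths under the $\mathcal{D}_\beta$-topology, which Lemma \ref{nona} shows is appropriate, so that the machinery of \cite{larsen2007stability} yields convergence in probability, which one then upgrades to $\mathbb{L}^s$-convergence by a uniform-integrability argument powered once more by the moment bounds of the first step.

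Once operator-norm continuity of $\lambda \mapsto L_\lambda$ is established, the conclusion follows quickly. The directional derivative of $\varepsilon \mapsto u^{\lambda+\varepsilon\lambda'}$ at $\varepsilon$ is $L_{\lambda+\varepsilon\lambda'}(\lambda')$, which is continuous in $\varepsilon$, so this map is $C^1$ and the fundamental theorem of calculus gives
$$u^{\lambda+\lambda'} - u^\lambda - L_\lambda(\lambda') = \int_0^1 \bigl(L_{\lambda+\varepsilon\lambda'}(\lambda') - L_\lambda(\lambda')\bigr)\, d\varepsilon.$$
Hence $|u^{\lambda+\lambda'} - u^\lambda - L_\lambda(\lambda')| \le \|\lambda'\|_\beta \, \sup_{0\le\varepsilon\le1}\|L_{\lambda+\varepsilon\lambda'} - L_\lambda\|_{\mathcal{D}_\beta^*}$, and as $\|\lambda'\|_\beta \to 0$ the supremum tends to $0$ by continuity, since $\lambda+\varepsilon\lambda'\to\lambda$ uniformly over $\varepsilon\in[0,1]$. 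This is exactly the Fréchet differentiability of $\lambda\mapsto u^\lambda(x)$, with derivative $L_\lambda$.
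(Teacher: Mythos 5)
Your proposal is correct and follows essentially the same route as the paper: you identify the same candidate derivative $L_\lambda$, bound it on $\mathcal{D}_\beta$ by the same splitting of $dR^\lambda$ with H\"older and Burkholder--Davis--Gundy estimates, and reduce Fr\'echet differentiability to operator-norm continuity of $\lambda \mapsto L_\lambda$, which you obtain exactly as the paper does, from stability in probability of the optimizers (via the appropriateness of the topology, Lemma \ref{nona}, and the machinery of \cite{larsen2007stability}) upgraded to the needed $\mathbb{L}^s$-convergence by a de la Vall\'ee-Poussin uniform-integrability argument resting on $\frac{-q\beta}{\beta-1}<1$ and $\mathbb{E}[\widehat{Y}_T]\le 1$. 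The only cosmetic differences are that you prove the ``continuous G\^ateaux derivative implies Fr\'echet differentiability'' principle inline via the fundamental theorem of calculus where the paper cites Proposition 3.2.15 of \cite{drabek2013methods}, and that you write the density through the primal optimizer $(\widehat{X}^\lambda_T)^p$ rather than the dual $V(\widehat{Y}^\lambda_T)$, which are identical by the duality relation in Theorem \ref{gateaux}.
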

\begin{proof}
Using \cite{larsen2014expansion}, the G\^ateaux derivative of $u^{\lambda}$ at $\lambda$ in the direction $\lambda'$ 
(see Theorem \ref{gateaux}) can be rewritten as
\begin{equation}\label{gateux2}
p\frac{u^{\lambda}}{v^{\lambda}} \mathbb{E}\left[ V(\widehat{Y}^{\lambda}_T)\int\limits_0^T {\lambda_t'dR^{\lambda}_t } \right].
\end{equation} 
First, we prove that the G\^ateaux derivative, as a linear operator acting on $\lambda'\in\mathcal{D}_{\beta}$ 
is continuous, at every point $\lambda\in\mathcal{D}_{\beta}$. Indeed, the linearity of (\ref{gateux2}) with respect to $\lambda'$ is obvious and thus we only need to consider its continuity on $\mathcal{D}_{\beta}$. Let $\lambda'' \to \lambda'$ in 
$\mathcal{D}_{\beta}$. Using the H\"older inequality and the Burkholder-Davis-Gundy inequality, we estimate
\begin{align*}
&\mathbb{E}\left[ V(\widehat{Y}^{\lambda}_T)\int\limits_0^T {|\lambda_t'- \lambda_t''|dR^{\lambda}_t } \right] \le \mathbb{E}\left[ V(\widehat{Y}^{\lambda}_T)^{\alpha} \right]^{\frac{1}{\alpha}} \mathbb{E}\left[\left|  \int\limits_0^T {|\lambda_t'- \lambda_t''|\lambda_t d\left\langle M \right \rangle _t }\right| ^{2\beta} \right]^{\frac{1}{2\beta}}  \\
&+ \mathbb{E}\left[ V(\widehat{Y}^{\lambda}_T)^{\alpha} \right]^{\frac{1}{\alpha}} \mathbb{E}\left[\left|  \int\limits_0^T {|\lambda_t'- \lambda_t''|dM_t }\right| ^{2\beta} \right]^{\frac{1}{2\beta}}\\
&\le \mathbb{E}\left[ V(\widehat{Y}^{\lambda}_T)^{\alpha} \right]^{\frac{1}{\alpha}} \mathbb{E}\left[\left|  \int\limits_0^T {|\lambda_t'- \lambda_t''|^2 d\left\langle M \right \rangle _t }\right| ^{\beta} \right]^{\frac{1}{2\beta}} \mathbb{E}\left[\left|  \int\limits_0^T {|\lambda_t|^2 d\left\langle M \right \rangle _t }\right| ^{\beta} \right]^{\frac{1}{2\beta}}  \\
&+ \mathbb{E}\left[ V(\widehat{Y}^{\lambda}_T)^{\alpha} \right]^{\frac{1}{\alpha}} \mathbb{E}\left[\left|  \int\limits_0^T {|\lambda_t'- \lambda_t''|^2d\left\langle M\right\rangle _t }\right| ^{\beta} \right]^{\frac{1}{2\beta}},
\end{align*}
which tends to zero and the first claim is proved. 

Now it suffices to prove that the G\^ateaux derivative is a continuous function of $\lambda$, see Proposition 3.2.15 of \cite{drabek2013methods}, 
i.e. we need to prove that the quantity in (\ref{gateux2}) is continuous on $\mathcal{D}_{\beta}$ (in the operator norm topology). 

Letting $\tilde{\lambda} \to 0$ in $\mathcal{D}_{\beta}$, we compute
\begin{align}\label{gateux_diff}
\frac{u^{\lambda + \tilde{\lambda}}}{v^{\lambda + \tilde{\lambda}}} &\mathbb{E}\left[ V(\widehat{Y}^{\lambda + \tilde{\lambda}}_T) \int\limits_0^T {\lambda_t'dR^{\lambda + \tilde{\lambda}}_t } \right] - \frac{u^{\lambda}}{v^{\lambda}} \mathbb{E}\left[ V(\widehat{Y}^{\lambda}_T)\int\limits_0^T {\lambda_t'dR^{\lambda}_t } \right] \nonumber\\
= &\mathbb{E}\left[ \frac{u^{\lambda + \tilde{\lambda}}}{v^{\lambda + \tilde{\lambda}}} V(\widehat{Y}^{\lambda + \tilde{\lambda}}_T)\int\limits_0^T {\lambda_t'dM_t} - \frac{u^{\lambda}}{v^{\lambda}} V(\widehat{Y}^{\lambda}_T)\int\limits_0^T {\lambda_t'dM_t } \right] \nonumber \\
+&\mathbb{E}\left[ \frac{u^{\lambda + \tilde{\lambda}}}{v^{\lambda + \tilde{\lambda}}} V(\widehat{Y}^{\lambda + \tilde{\lambda}}_T)\int\limits_0^T {\lambda_t' (\lambda_t + \tilde{\lambda}_t) d\left\langle M \right\rangle_t } - \frac{u^{\lambda}}{v^{\lambda}} V(\widehat{Y}^{\lambda}_T)\int\limits_0^T {\lambda_t'\lambda_t  d\left\langle M \right\rangle_t }\right]. 
\end{align}
Using H\"older's inequality, the first term in the RHS of (\ref{gateux_diff}) is bounded by
\begin{align*}
&\mathbb{E}\left[ \left| \left( \frac{u^{\lambda + \tilde{\lambda}}}{v^{\lambda + \tilde{\lambda}}} V(\widehat{Y}^{\lambda + \tilde{\lambda}}_T) - \frac{u^{\lambda}}{v^{\lambda}} V(\widehat{Y}^{\lambda}_T)  \right)  \int\limits_0^T {\lambda_t'dM_t}\right|  \right] \\
\le &\mathbb{E}\left[ \left| \frac{u^{\lambda + \tilde{\lambda}}}{v^{\lambda + \tilde{\lambda}}} V(\widehat{Y}^{\lambda + \tilde{\lambda}}_T) - \frac{u^{\lambda}}{v^{\lambda}} V(\widehat{Y}^{\lambda}_T)  \right|^{\alpha} \right]^{\frac{1}{\alpha}} \mathbb{E}\left[ \left| \int\limits_0^T {\lambda_t'dM_t}\right|^{2\beta}  \right]^{\frac{1}{2\beta}} \\
\le &C \mathbb{E}\left[ \left| \frac{u^{\lambda + \tilde{\lambda}}}{v^{\lambda + \tilde{\lambda}}} V(\widehat{Y}^{\lambda + \tilde{\lambda}}_T) - \frac{u^{\lambda}}{v^{\lambda}} V(\widehat{Y}^{\lambda}_T)  \right|^{\alpha} \right]^{\frac{1}{\alpha}} \mathbb{E}\left[ \left( \int\limits_0^T {(\lambda_t')^2d\left\langle M \right\rangle _t}\right) ^{\beta}  \right]^{\frac{1}{2\beta}},
\end{align*}
where the last bound is due to the Burkholder-Davis-Gundy inequality.
By Theorem \ref{continuous}, we have that $u^{\lambda + \tilde{\lambda}} \to u^{\lambda}$ as $\tilde{\lambda} \to 0$ in 
$\mathcal{D}_{\beta}$. In addition, $V(\widehat{Y}^{\lambda + \tilde{\lambda}}_T) \to V(\widehat{Y}^{\lambda}_T)$ in probability, see Lemma 3.5 of \cite{larsen2007stability}. This implies 
$$\frac{u^{\lambda + \tilde{\lambda}}}{v^{\lambda + \tilde{\lambda}}} V(\widehat{Y}^{\lambda + \tilde{\lambda}}_T) - \frac{u^{\lambda}}{v^{\lambda}} V(\widehat{Y}^{\lambda}_T) \to 0, \qquad \text{in probability as } \tilde{\lambda} \to 0.$$
Furthermore, the inequality $|x+y|^{\alpha} \le 2^{\alpha-1}(|x|^{\alpha} + |y^{\alpha}|)$ (with $\alpha > 1$) yields
\begin{equation}\label{uniform}
\left| \frac{u^{\lambda + \tilde{\lambda}}}{v^{\lambda + \tilde{\lambda}}} V(\widehat{Y}^{\lambda + \tilde{\lambda}}_T) - \frac{u^{\lambda}}{v^{\lambda}} V(\widehat{Y}^{\lambda}_T) \right|^{\alpha} \le 
2^{\alpha - 1}\left( \left(  \frac{u^{\lambda + \tilde{\lambda}}}{v^{\lambda + \tilde{\lambda}}} \right)^{\alpha}  (\widehat{Y}^{\lambda + \tilde{\lambda}}_T)^{-q\alpha} + \left( \frac{u^{\lambda}}{v^{\lambda}}\right)^{\alpha}  (\widehat{Y}^{\lambda}_T)^{-q\alpha}\right).
\end{equation}
It is noticed that $(-q) \alpha < 1,$ and $\sup_{\tilde{\lambda}}\mathbb{E}[\widehat{Y}^{\lambda + \tilde{\lambda}}_T] \le 1$, the theorem of de la Vall\'ee-Poussin implies that the family 
$(\widehat{Y}^{\lambda + \tilde{\lambda}}_T)^{-q \alpha}$ is uniformly integrable. Together with (\ref{uniform}), these imply
$$\mathbb{E}\left[ \left| \frac{u^{\lambda + \tilde{\lambda}}}{v^{\lambda + \tilde{\lambda}}} V(\widehat{Y}^{\lambda + \tilde{\lambda}}_T) - \frac{u^{\lambda}}{v^{\lambda}} V(\widehat{Y}^{\lambda}_T)  \right|^{\alpha} \right]^{\frac{1}{\alpha}} \to 0.$$
The second term on the RHS of (\ref{gateux_diff}) is estimated as follows,
\begin{align*}
&\mathbb{E}\left[\left|  \frac{u^{\lambda + \tilde{\lambda}}}{v^{\lambda + \tilde{\lambda}}} V(\widehat{Y}^{\lambda + \tilde{\lambda}}_T)\int\limits_0^T {\lambda_t' (\lambda_t + \tilde{\lambda}_t) d\left\langle M \right\rangle_t } - \frac{u^{\lambda}}{v^{\lambda}} V(\widehat{Y}^{\lambda}_T)\int\limits_0^T {\lambda_t'\lambda_t  d\left\langle M \right\rangle_t }\right| \right]\\
\le &\mathbb{E}\left[\left|  \frac{u^{\lambda + \tilde{\lambda}}}{v^{\lambda + \tilde{\lambda}}} V(\widehat{Y}^{\lambda + \tilde{\lambda}}_T)\int\limits_0^T {\lambda_t' (\lambda_t + \tilde{\lambda}_t) d\left\langle M \right\rangle_t } - \frac{u^{\lambda + \tilde{\lambda}}}{v^{\lambda + \tilde{\lambda}}} V(\widehat{Y}^{\lambda+ \tilde{\lambda}}_T)\int\limits_0^T {\lambda_t'\lambda_t  d\left\langle M \right\rangle_t }\right| \right] \\
+ &\mathbb{E}\left[\left|  \frac{u^{\lambda + \tilde{\lambda}}}{v^{\lambda + \tilde{\lambda}}} V(\widehat{Y}^{\lambda+ \tilde{\lambda}}_T)\int\limits_0^T {\lambda_t'\lambda_t  d\left\langle M \right\rangle_t } - \frac{u^{\lambda}}{v^{\lambda}} V(\widehat{Y}^{\lambda}_T)\int\limits_0^T {\lambda_t'\lambda_t  d\left\langle M \right\rangle_t }\right| \right].
\end{align*}
By using the generalized H\"older inequality in the same way as above, we obtain the following estimations
\begin{align*}
&\mathbb{E}\left[\left|  \frac{u^{\lambda + \tilde{\lambda}}}{v^{\lambda + \tilde{\lambda}}} V(\widehat{Y}^{\lambda+ \tilde{\lambda}}_T) \int\limits_0^T {\lambda_t'\lambda_t  d\left\langle M \right\rangle_t } - \frac{u^{\lambda}}{v^{\lambda}} V(\widehat{Y}^{\lambda}_T)\int\limits_0^T {\lambda_t'\lambda_t  d\left\langle M \right\rangle_t }\right| \right] \le \\
&\mathbb{E}\left[\left|  \frac{u^{\lambda + \tilde{\lambda}}}{v^{\lambda + \tilde{\lambda}}} V(\widehat{Y}^{\lambda+ \tilde{\lambda}}_T) - \frac{u^{\lambda}}{v^{\lambda}} V(\widehat{Y}^{\lambda}_T)\right|^{\alpha} \right]^{\frac{1}{\alpha}} \mathbb{E}\left[ \left( \int\limits_0^T {\lambda_t^2d\left\langle M \right\rangle _t}\right) ^{\beta}  \right]^{\frac{1}{2\beta}} \mathbb{E}\left[ \left( \int\limits_0^T {(\lambda_t')^2d\left\langle M \right\rangle _t}\right) ^{\beta}  \right]^{\frac{1}{2\beta}} 
\end{align*}
and 
\begin{align*}
&\mathbb{E}\left[\left|  \frac{u^{\lambda + \tilde{\lambda}}}{v^{\lambda + \tilde{\lambda}}} V(\widehat{Y}^{\lambda + \tilde{\lambda}}_T)\int\limits_0^T {\lambda_t' (\lambda_t + \tilde{\lambda}_t) d\left\langle M \right\rangle_t } - \frac{u^{\lambda + \tilde{\lambda}}}{v^{\lambda + \tilde{\lambda}}} V(\widehat{Y}^{\lambda+ \tilde{\lambda}}_T)\int\limits_0^T {\lambda_t'\lambda_t  d\left\langle M \right\rangle_t }\right| \right] \le \\
&\mathbb{E}\left[\left|  \frac{u^{\lambda + \tilde{\lambda}}}{v^{\lambda + \tilde{\lambda}}} V(\widehat{Y}^{\lambda+ \tilde{\lambda}}_T)\right|^{\alpha} \right]^{\frac{1}{\alpha}} \mathbb{E}\left[ \left( \int\limits_0^T {\tilde{\lambda}_t^2d\left\langle M \right\rangle _t}\right) ^{\beta}  \right]^{\frac{1}{2\beta}} \mathbb{E}\left[ \left( \int\limits_0^T {(\lambda_t')^2d\left\langle M \right\rangle _t}\right) ^{\beta}  \right]^{\frac{1}{2\beta}}.
\end{align*}
From these estimations and (\ref{gateux2}), we conclude that the G\^ateaux derivative is continuous and hence the proof is complete.
\end{proof}

\begin{re}\label{arci}{\rm We wish to point out that Theorem \ref{frechet} is not a mere technical improvement with respect
to previous results (differentiability in a stronger sense is established) but it opens the door to 
essentially new developments that were hitherto inaccessible.

We briefly explain a general scheme here that can be implemented in various models. 
Sections \ref{sec_fBm} and \ref{sec_further} below carry out this scheme in concrete models, but the main idea
seems to be applicable in great generality.

Let $\Theta\subset\mathbb{R}^m$ be an open set and consider a parametric class of market models
of the type as in Section \ref{sec_setting}, i.e. let $\upsilon:\Theta\to\mathcal{D}_{\beta}$ be a mapping.
The market model corresponding to parameter $\theta\in\Theta$ will have drift $\upsilon(\theta)$.

Assume that $\upsilon$ is differentiable. Then the composition of $\upsilon$ with the mapping
$\lambda\to u^{\lambda}$, $\lambda\in \mathcal{D}_{\beta}$ is also differentiable by Theorem
\ref{frechet} (note that for this conclusion to hold, G\^ateaux differentiability does not suffice).
It follows that we can find a first order expansion of the mapping $\theta\to u^{\upsilon(\theta)}$ in
$\theta$, i.e. the sensitivity of the optimal expected utility with respect to the model parameter can be studied. 
We stress that it is Theorem \ref{frechet} that makes such an approach
feasible: G\^ateaux differentiability allows the study of linear perturbations for the drift while Fr\'echet 
differentiability permits non-linear perturbations as well.}
\end{re}

\begin{re}\label{mascot}
	{\rm Fix $\lambda$. Denoting by $\pi^{\lambda}$ the optimal strategy
		corresponding to drift $\lambda$, one expects that $\pi^{\lambda}$
		is nearly optimal for the problem with drift $\lambda'$ as well, provided
		that the latter is ``close enough'' to $\lambda$.
		
		When we fix a direction $\lambda'$ and we are only interested in approximating
		$\pi^{\lambda+\varepsilon\lambda'}$ then this issue has been thoroughly
		studied in Subsection 3.2 of \cite{larsen2014expansion}. They obtain,
		under suitable technical conditions, that $\pi^{\lambda}$ is $O(\varepsilon^2)$-optimal
		for the problem with drift $\lambda+\varepsilon \lambda'$. Moreover,
		in a Brownian setting (see \cite{larsen2014expansion}
		for details), a correction term can be found that improves on this:
		\begin{equation}\label{lal}
		{\pi}^{\lambda}+\varepsilon \frac{\lambda'+p\gamma^B}{1-p}
		\end{equation}
		is an $O(\varepsilon^3)$-optimal control for the drift $\lambda+\varepsilon\lambda'$.
		Here $\gamma^B$ is a process coming from the martingale representation
		theorem applied to $\int_0^T {\pi}^{\lambda}_t\lambda_t' d\langle M\rangle_t$ under
		a certain probability $\tilde{\mathbb{P}}^{\lambda}$ (see Theorem \ref{gateaux}) 
		that is constructed from the solution of the
		utility maximization problem at $\lambda$. 
		
		As seen from the above, the correction term is not simple because $\gamma^B$
		can be found only in very specific cases. The sufficient conditions for these results to hold are 
		difficult to check in concrete models due to the presence of $\tilde{\mathbb{P}}^{\lambda}$ on which one has little
		grasp. 
		
		Nonetheless we quickly sketch a (rather trivial and rather stringent) sufficient condition that ensures
		$O(\varepsilon)$ optimality of $\pi^{\lambda}$ for $\lambda'$ with $\Vert\lambda'-\lambda\Vert_{\beta}<\varepsilon$,
		\emph{irrespective} of the direction of $\lambda'-\lambda$. 
		For some more general comments on related problems, see Section \ref{conci}. Let $\lambda,\lambda'
		\in\mathcal{D}_{\beta}$. 
		
		One can write
		\begin{eqnarray*}
			\left|U\left(\mathcal{E}\left(\int_0^T \pi^{\lambda}_t\, dR^{\lambda}_t\right)\right)-
			U\left(\mathcal{E}\left(\int_0^T \pi^{\lambda}_t\, dR^{\lambda'}_t\right)\right)\right| &=&\\
			\frac{1}{|p|}\exp\left\{p\int_0^T \pi_t^{\lambda}\,dM_t+p\int_0^T \pi_t^{\lambda} \lambda_t\, d\langle M\rangle_t
			-\frac{p}{2}\int_0^T (\pi_t^{\lambda})^2\, d\langle M\rangle_t\right\} &\times &\\
			\left|\exp\left\{p\int_0^T \pi_t^{\lambda} (\lambda_t'-\lambda_t)\, d\langle M\rangle_t
			\right\}-1\right|
		\end{eqnarray*}
		so, using $|e^x-1|\leq |x| \sup_{|\varepsilon| \le 1} e^{\varepsilon |x|}$ and the Cauchy inequality, 
		\begin{eqnarray*}
			\mathbb{E}\left|U\left(\mathcal{E}\left(\int_0^T \pi^{\lambda}_t\, dR^{\lambda}_t\right)\right)-
			U\left(\mathcal{E}\left(\int_0^T \pi^{\lambda}_t\, dR^{\lambda'}_t\right)\right)\right| \leq \\
			\frac{1}{|p|}\mathbb{E}^{1/2} \exp\left\{2p\int_0^T \pi_t^{\lambda}\,dM_t+2p\int_0^T \pi_t^{\lambda} 
		\lambda_t\, d\langle M\rangle_t
			-p\int_0^T (\pi_t^{\lambda})^2\, d\langle M\rangle_t\right\}   \\ 
			  \times \\
			\vert p\vert\mathbb{E}^{1/2} \left[ \left(\int_0^T |\pi_t^{\lambda}| |\lambda_t'-\lambda_t|\, d\langle M\rangle_t\right)^2 \sup_{|\varepsilon| \le 1} \exp \left\lbrace  2 \varepsilon p \int_0^T |\pi_t^{\lambda}| |\lambda_t'-\lambda_t|\, d\langle M\rangle_t \right\rbrace  \right] . 
		\end{eqnarray*}
		Note that
		\begin{eqnarray*}
			\left|u^{\lambda'}-\mathbb{E}U\left(\mathcal{E}\left(\int_0^T \pi^{\lambda}_t\, dR^{\lambda'}_t\right)
\right)	\right| &\leq&\\
			|u^{\lambda'}-u^{\lambda}|+ \mathbb{E}\left|U\left(\mathcal{E}\left(\int_0^T \pi^{\lambda}_t\, dR^{\lambda}_t\right)\right)- U\left(\mathcal{E}\left(\int_0^T \pi^{\lambda}_t\, dR^{\lambda'}_t\right)\right)\right|.
		\end{eqnarray*}
		Here the first term is $\leq C_1\Vert\lambda-\lambda'\Vert_{\beta}$ with some
		constant $C_1>0$ by Theorem \ref{frechet}. If
		\begin{eqnarray*}
		C_2:=\mathbb{E} \exp\left\{2p\int_0^T \pi_t^{\lambda}\,dM_t+2p\int_0^T \pi_t^{\lambda} \lambda_t\, d\langle M\rangle_t
		-p\int_0^T (\pi_t^{\lambda})^2 \, d\langle M\rangle_t\right\}
		\end{eqnarray*} 
		is finite (which is a condition similar in spirit to those of \cite{larsen2014expansion}) 
		and $\exp \left\lbrace  2p \int_0^T |\pi_t^{\lambda}| |\lambda_t'-\lambda_t|\, d\langle M\rangle_t \right\rbrace < K$,
		for some constant $K$ then the second term is dominated by
		\begin{equation}\label{marvo}
		C_2 K \mathbb{E}^{1/2}\left(\int_0^T |\pi_t^{\lambda}| |\lambda_t'-\lambda_t|\, d\langle M\rangle_t\right)^2,
		\end{equation}
		and the latter expression is expected to be small if $\lambda-\lambda'$ is small
		in an appropriate sense, depending on the concrete class of models and on what
		we know about the optimizer $\pi^{\lambda}$. For instance, if $\pi^{\lambda}, \lambda', \lambda$ were
		bounded and \[
		l\, dt\leq d\langle M\rangle_t\leq L\, dt,
		\] 
		held for some constants $l,L>0$ then there would be such a $K$ and \eqref{marvo} 
		would be dominated by constant times
		$\Vert\lambda -\lambda'\Vert_{\beta}$, we omit further details.}
\end{re}

\section{Optimization with fBM}\label{sec_fBm}

We recall here the definition of fractional Brownian motion and some related properties. 
\begin{defi} \label{defi:fBm}
	The standard fractional Brownian motion $(B^H_t)_{t \in \mathbb{R}}$ with Hurst parameter $0 < H < 1$ is defined as a continuous centered Gaussian process with
	$$\text{Cov}(B^H_t, B^H_s) = \frac{1}{2}\left( |t|^{2H} + |s|^{2H} - | t - s|^{2H}\right), \qquad t, s \in \mathbb{R}.$$
\end{defi}
For $H> \frac{1}{2}$, the
increments of $B^H$ are positively correlated; for $H < \frac{1}{2}$, they are negatively correlated. Furthermore, for every $\kappa \in  (0,H)$, its sample paths are almost surely Hölder continuous with index $\kappa$. It is worth noting that for $H > \frac{1}{2}$, the process exhibits long-range dependence. In the following, we will often use the integral representation of $B^H$,
\begin{equation}\label{hox}
B^H_t = \int\limits_{\mathbb{R}} {K^H(t,s)dW_s}, \qquad H \in \left(0, \frac{1}{2} \right)  \cup \left( \frac{1}{2}, 1 \right), 
\end{equation}
where 
\begin{equation}\label{kernel_fBM}
K^H(t,s):= C(H) \left( (t-s)_+^{H-\frac{1}{2}} - (-s)_+^{H-\frac{1}{2}} \right),\ t\geq 0,\ s\in\mathbb{R},
\end{equation}
and $W$ is a Brownian motion. 
Here we adopt the usual convention $x_+ := \max(0, x)$. $C(H)$ is the normalizing constant and defined by $$C(H) :=\left(  \int\limits_0^{\infty} { \left( (1 + s)^{H - \frac{1}{2}} - s^{H - \frac{1}{2}} \right)^2ds} + \frac{1}{2H}\right)^{-\frac{1}{2}} = \frac{(2H \sin \pi H \Gamma(2H))^{\frac{1}{2}}}{\Gamma\left(H + \frac{1}{2}\right) }.$$ 
The following notations  will be used in the sequel to simplify our computations.
\begin{align*}
C_1(\alpha) &:= \int\limits_0^{\infty} {\left( (1+s)^{\alpha } - (s)^{\alpha} \right)^2ds} + \int_0^1{(1-s)^{2\alpha}ds},\\
C_2(\alpha) &:=  \int\limits_0^{\infty} {\left( (1+s)^{\alpha} \ln(1+s) - (s)^{\alpha} \ln s \right)^2ds} + \int_0^1{ \left(  (1-s)^{\alpha} \ln (1-s)\right)^2 ds}.
\end{align*}
If $\alpha \in \left( -\frac{1}{2}, \frac{1}{2} \right)$, the functions $C_1(\alpha)$ and $C_2(\alpha)$ are well-defined.

The use of fractional Brownian motion in stock price modeling has a long history. Most importantly, fBms may provide an explanation for the presence of the long-range dependence phenomenon in stock return, as suggested by many empirical studies, for example \cite{willinger1999stock}. In a seminal work, \cite{comte1998long} proposed to model log-volatility by an fBm with Hurst parameter $H > \frac{1}{2}$ in order to capture the long memory property. %see also in \cite{comte2012affine}.
By contrast, \cite{fukasawa2011asymptotic} and \cite{gatheral2014volatility} suggest that the choice $H < \frac{1}{2}$ enables us to be   consistent with observed term structure of the volatility skew. 

In this paper, we discuss the effect of the Hurst parameter on the value function rather than the choice 
of this parameter. In Subsection \ref{model_1}, the market model is chosen such that only 
the drift exhibits long or negative memory. In Subsection \ref{model2}, the log-volatility takes memory into account.
\subsection{Model 1}\label{model_1}
We assume that the market price of risk evolves as a fractional Ornstein-Uhlenbeck process satisfying $d\lambda^H_t=-\alpha \lambda^H_t dt +  dB^H_t,$ where $\alpha>0$ and $B^H$ is a fractional Brownian motion with parameter $H \in (0,1)$. This equation admits an explicit solution, 
$$\lambda^H_t = e^{- \alpha t} \left(\lambda^H_0 + \int\limits_0^t {e^{\alpha u} dB^H_u}  \right).$$
Here, the stochastic integral with respect to fBM is simply a pathwise Riemann Stieltjes integral, see Proposition A.1 of \cite{cheridito2003fractional}. 
Assume that the price of a risky asset follows $S^H = \mathcal{E}(R^H)$, where
$$dR^H_t= \lambda^H_tdt + \rho dW_t + \sqrt{1-\rho^2} dB_t,$$
$\rho \in (0,1)$ and $B$ is a Brownian motion independent of $W$. Let $\mathbb{F}$ be the completed natural filtration
of $(W,B)$.

\begin{pro}\label{diff_H}
	The mapping
\begin{align*}
 (0,1) &\to \mathcal{D}_{\beta}\\
H &\mapsto (\lambda^H_t)_{t \in[0,T]}
\end{align*}
is Fr\'echet differentiable and its derivative is the process
\begin{equation}\label{d_lambda}
D\lambda^{H}_t := \int\limits_{\mathbb{R}} { \partial_H K^H(t,s) dW_s}  - \alpha e^{-\alpha t} \int\limits_0^t{\left( \int\limits_{\mathbb{R}} {\partial_H K^H(u,s)  dW_s}\right) e^{\alpha u}du}.
\end{equation}
\end{pro}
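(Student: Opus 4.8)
The plan is to reduce the claim to a deterministic statement about kernels in a Hilbert space. Since the domain $(0,1)$ is one-dimensional, Fr\'echet differentiability at a point $H$ amounts to the existence of the norm limit $\lim_{h\to 0}(\lambda^{H+h}-\lambda^{H})/h$ in $\mathcal{D}_{\beta}$, with $D\lambda^H$ as candidate. The first step is to recast $\lambda^H$ as a Wiener integral of a deterministic kernel. Because $e^{\alpha\cdot}$ is $C^1$, the pathwise Riemann--Stieltjes integral $\int_0^t e^{\alpha u}\,dB^H_u$ (well defined by Proposition A.1 of \cite{cheridito2003fractional}) can be integrated by parts; inserting the representation \eqref{hox} and applying the stochastic Fubini theorem then yields
$$\lambda^H_t = e^{-\alpha t}\lambda_0 + \int_{\mathbb{R}} \kappa^H(t,s)\,dW_s,\qquad \kappa^H(t,s) := K^H(t,s) - \alpha e^{-\alpha t}\int_0^t e^{\alpha u}K^H(u,s)\,du,$$
where $\lambda_0$ denotes the ($H$-independent) initial value. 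The deterministic term is independent of $H$ and so drops out upon differentiating, while the double-integral expression \eqref{d_lambda} is exactly $\int_{\mathbb{R}}\partial_H\kappa^H(t,s)\,dW_s$ after undoing the Fubini step.

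The second step is to observe that the Wiener-integral map $\Phi: f\mapsto \bigl(\int_{\mathbb{R}} f(t,s)\,dW_s\bigr)_{t\in[0,T]}$ is a bounded linear operator from the Hilbert space $\mathcal{L}:=L^2([0,T]\times\mathbb{R})$ into $\mathcal{D}_{\beta}$. Indeed, here $\langle M\rangle_t=t$, so $\Phi(f)$ is a centred Gaussian process whose $L^2([0,T])$-norm, being the norm of a Gaussian vector, has equivalent moments (Fernique), giving $\|\Phi(f)\|_{\beta}\le C_{\beta}\|f\|_{\mathcal{L}}$ for a constant $C_{\beta}$ depending only on $\beta$. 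Hence it suffices to prove that $H\mapsto\kappa^H$ is differentiable as a map $(0,1)\to\mathcal{L}$ with derivative $\partial_H\kappa^H$; the conclusion for $\lambda^H$ then follows by composing with the bounded linear $\Phi$.

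This reduces everything to the deterministic estimate
$$\left\| \frac{\kappa^{H+h}-\kappa^H}{h} - \partial_H\kappa^H \right\|_{\mathcal{L}}^2 = \int_0^T\!\!\int_{\mathbb{R}} \left( \frac{\kappa^{H+h}(t,s)-\kappa^H(t,s)}{h} - \partial_H\kappa^H(t,s)\right)^{2}\! ds\,dt \longrightarrow 0 .$$
Computing $\partial_H K^H$ produces two contributions, one proportional to $C'(H)$ and one carrying logarithmic factors $\ln(t-s)_+$ and $\ln(-s)_+$, whose $L^2$-sizes are governed by the integrals $C_1(\alpha)$ and $C_2(\alpha)$ with $\alpha=H-\tfrac12\in(-\tfrac12,\tfrac12)$. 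Working on a compact neighbourhood $[H_0,H_1]\subset(0,1)$ keeps $\alpha$ bounded away from $\pm\tfrac12$, so these integrals remain finite and uniformly bounded. A pointwise mean-value argument rewrites the integrand as $\partial_H\kappa^{\xi}-\partial_H\kappa^{H}$ for some $\xi$ between $H$ and $H+h$, which tends to $0$ almost everywhere; dominated convergence then closes the estimate once a fixed $\mathcal{L}$-integrable majorant is exhibited.

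The main obstacle is precisely this last point: the kernel $K^H(t,s)$ and its $H$-derivative are singular at $s=t$ and $s=0$ and must be controlled as $s\to-\infty$, and the logarithmic factors aggravate these singularities. Producing the required uniform-in-$H$ square-integrable majorant --- equivalently, establishing finiteness and local boundedness of the $C_1,C_2$-type integrals together with the $\mathcal{L}$-continuity of $H\mapsto\partial_H\kappa^H$ --- is exactly where the careful fBm computations collected in Section \ref{sec_appen} are needed; the remainder of the argument is soft functional analysis.
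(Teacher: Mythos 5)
Your argument is correct, and it reorganizes the proof in a genuinely different way from the paper. The paper works directly with the random variables: it integrates by parts to split $\lambda^{H+\varepsilon}_t-\lambda^H_t-\varepsilon D\lambda^H_t$ into a fBm-increment term $I^1_\varepsilon$ and an Ornstein--Uhlenbeck convolution term $I^2_\varepsilon$, computes the $2\beta$-moment of each (Gaussian) term pointwise in $t$ via the identity $\mathbb{E}|X|^{2\beta}=c_\beta(\mathbb{E}X^2)^{\beta}$, invokes Lemma \ref{lem:limit2} for the pointwise-in-$t$ convergence and the uniform $L^{\beta}(0,T)$ majorant, and closes with dominated convergence in $t$. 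You instead absorb the convolution into a single deterministic kernel $\kappa^H$ via stochastic Fubini, and factor all the probability into one boundedness statement: the Wiener-integral map $\Phi$ is bounded linear from $L^2([0,T]\times\mathbb{R})$ into $\mathcal{D}_{\beta}$ by Gaussian moment equivalence (Fernique), since $\langle M\rangle_t=t$ here. This buys modularity --- differentiability of $H\mapsto\kappa^H$ in the Hilbert space $\mathcal{L}$ transfers immediately to $\lambda^H$ --- and it is formally \emph{more economical}: you only need $\int_0^T\sigma_\varepsilon^2(t)\,dt\to 0$ where $\sigma_\varepsilon^2(t)$ is the $L^2(ds)$-error of the difference quotient, i.e. $L^1(0,T)$-convergence, whereas the paper's H\"older step demands the stronger $L^{\beta}(0,T)$-convergence of $\sigma_\varepsilon^2$ (both follow from Lemma \ref{lem:limit2}, so nothing is lost either way). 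The hard analytic content --- the mean-value-theorem argument on $\partial_H K^H$ and the uniform majorant controlling the singularities at $s=t$, $s=0$ and the tail $s\to-\infty$, i.e. the $C_1,C_2$ estimates of Lemma \ref{lemma:compute_integral} and Lemma \ref{lem:limit2} --- is identical in both routes, and you correctly identify it as the crux rather than claiming it for free. Two small points you should make explicit if you write this up: the stochastic Fubini interchange needs a (routine) justification via the $L^2$ bound $\int_0^t e^{\alpha u}\|K^H(u,\cdot)\|_{L^2(\mathbb{R})}\,du<\infty$; and $\Phi$ maps into $\mathcal{P}^2_M$ only when restricted to the closed subspace of kernels supported on $\{s\le t\}$ (with a progressively measurable version of the resulting Gaussian process), which $\kappa^H$ and $\partial_H\kappa^H$ satisfy since $K^H(t,s)$ vanishes for $s>t$.
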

\begin{proof}
First, since $\lambda^H_t$ is a Gaussian random variable, we easily check that $\lambda^H \in  \mathcal{D}_{\beta}$. It suffices to prove that
\begin{equation}\label{diff}
\frac{1}{|\varepsilon|}\left\| \lambda^{H+\varepsilon} - \lambda^H - \varepsilon D\lambda^H \right\|_{\beta}\to 0 \qquad \text{as } \varepsilon \to 0.
\end{equation}
Using H\"older's inequality, we have
$$
\left\| \lambda^{H+\varepsilon} - \lambda^H - \varepsilon D\lambda^H \right\|_{\beta}^{2\beta} \le c \int\limits_0^T {\mathbb{E}\left[\left| \lambda^{H+\varepsilon}_t - \lambda^H_t - \varepsilon D\lambda^H_t \right|^{2\beta}\right]dt}. 
$$
In what follows, the constant $c$ may vary from line to line. We compute $\int\limits_0^t {e^{\alpha u} dB^H_u}  =   e^{\alpha t}B^H_t - e^{\alpha t}B^H_0 - \alpha\int\limits_0^t{B^H_ue^{\alpha u}du},$ by integrating by parts. Therefore, we obtain
\begin{align*}
\lambda^{H+\varepsilon}_t &- \lambda^H_t -  \varepsilon\partial_H\lambda^H_t = \underbrace {B^{H+\varepsilon}_t - B^{H}_t - \varepsilon \int\limits_{\mathbb{R}} { \partial_H K^H(t,s) dW_s}}_{I^1_{\varepsilon}} \nonumber \\
&- \underbrace {\alpha  \int\limits_0^t{e^{-\alpha (t-u)}\left( B^{H+\varepsilon}_u-B^{H}_u - \varepsilon \int\limits_{\mathbb{R}} {\partial_H K^H(u,s)  dW_s}\right) du}}_{ I^2_{\varepsilon}}.
\end{align*}
It is immediate that $$\varepsilon^{-2\beta}\left\| \lambda^{H+\varepsilon} - \lambda^H - \varepsilon D\lambda^H \right\|_{\beta}^{2\beta} \le c \varepsilon^{-2\beta} \int\limits_0^T {\left(  \mathbb{E}\left[ I^1_{\varepsilon}(t) ^{2\beta} \right]  + \mathbb{E}\left[I^2_{\varepsilon}(t)^{2\beta}\right] \right) dt}.$$

\noindent
\textbf{Estimation for $I^{\varepsilon}_1$.} By computing the $2\beta$-moment of the Gaussian random variable $I^1_{\varepsilon}$, we obtain
\begin{align*}
\varepsilon^{-2\beta}\mathbb{E}\left[ I^1_{\varepsilon}(t) ^{2\beta} \right] &\le c \left(\int\limits_{\mathbb{R}} { \left|\frac{K^{H + \varepsilon}(t,s) - K^{H}(t,s)}{\varepsilon} - \partial_H K^H(t,s) \right|^{2} ds} \right)^{\beta} 
\end{align*}
which tends to zero by Lemma \ref{lem:limit2}. Furthermore, if $|\varepsilon| < \delta$ for some $\delta > 0$, this term is bounded by $c \left( t^{2H + 1} ( 1 + (\ln t)^2 ) \right)^{\beta}$ if $t \ge 1$ and $c \left( t^{2H -2\delta} ( 1 + (\ln t)^2 ) \right)^{\beta}$ if $0\le t < 1.$

\noindent
\textbf{Estimation for $I^2_{\varepsilon}$.} On the one hand, we see that
\begin{align*}
&\varepsilon^{-2\beta}\mathbb{E}\left[ I^2_{\varepsilon}(t) ^{2\beta} \right]
\le \alpha   \int\limits_0^t{e^{- 2 \alpha \beta (t-u)} \mathbb{E}\left[\left| \frac{B^{H + \varepsilon}_u - B^H_u }{\varepsilon}  -  \int_{\mathbb{R}} { \partial_H K^H(u,s) dW_s}\right|^{2\beta}\right]du}  \\
&\le \alpha \left( \int\limits_0^t{e^{-2\alpha (t-u)} \left(\int\limits_{\mathbb{R}}  {\left| \frac{ K^{H + \varepsilon}(u,s) -  K^{H}(u,s)}{ \varepsilon} - \partial_H K^H(u,s)\right|^2 ds}\right) ^{\beta} du} \right)
\end{align*}
which tends to zero by Lemma \ref{lem:limit2} and the dominated convergence theorem. On the other hand, we obtain  
\begin{align*}
\varepsilon^{-2\beta}\mathbb{E}\left[ I^2_{\varepsilon}(t) ^{2\beta} \right] \le c \int_0^{t \wedge 1}{e^{-2\alpha(t-u)} \left( u^{2H -2\delta} ( 1 + (\ln u)^2 ) \right)^{\beta} du} \\
+ c \int_1^{t \vee 1} {e^{-2\alpha(t-u)} \left( u^{2H + 1} ( 1 + (\ln u)^2 ) \right)^{\beta} du} .
\end{align*}
Using the dominated convergence theorem again, the convergence (\ref{diff}) holds true. In conclusion, the Fr\'echet derivative of the process $\lambda^{H}$ is found.
\end{proof}

Putting together Theorem \ref{frechet} and Proposition \ref{diff_H}, we obtain the 
derivative of the value function $u^H(x):=u^{\lambda^H}(x)$ with respect to the Hurst parameter $H$.
%; this constitutes the main result of our paper.

\begin{theorem}
%For every $x >0$, t
The mapping $H \mapsto u^H(1)$ from $(0,1)$ to $ \mathbb{R}$ is differentiable. Its derivative is given by
\begin{align*}
pu^{\lambda^{H}} \mathbb{E}\left[ \frac{d\tilde{\mathbb{P}}^{\lambda^{H}}}{d\mathbb{P}}\int\limits_0^T {(D\lambda^H_t)dR^{\lambda^{H}}_t} \right].
\end{align*}
\end{theorem}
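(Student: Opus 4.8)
The plan is to obtain the result as a clean application of the chain rule for Fr\'echet derivatives, feeding into each other the two differentiability statements already at our disposal. Write $\Phi(\lambda):=u^{\lambda}(1)$ for the value functional on $\mathcal{D}_{\beta}$ and $\upsilon(H):=\lambda^{H}$ for the parametrization $(0,1)\to\mathcal{D}_{\beta}$; the map we must differentiate is exactly the composition $H\mapsto u^{H}(1)=\Phi(\upsilon(H))$, i.e. the concrete instance of the general scheme outlined in Remark \ref{arci}. By Proposition \ref{diff_H} the inner map $\upsilon$ is Fr\'echet differentiable, and by Theorem \ref{frechet} the outer map $\Phi$ is Fr\'echet differentiable on $\mathcal{D}_{\beta}$. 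Hence the composition is Fr\'echet differentiable and, since its domain $(0,1)\subset\mathbb{R}$ is one-dimensional, this is just ordinary differentiability; its derivative is the composition of the two derivatives.

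First I would record the shape of each factor. The derivative of $\upsilon$ at $H$ is a bounded linear map $\mathbb{R}\to\mathcal{D}_{\beta}$, and evaluating it at the scalar $1$ returns the element $D\lambda^{H}\in\mathcal{D}_{\beta}$ displayed in \eqref{d_lambda}. In particular $D\lambda^{H}$ genuinely lives in the Banach space $\mathcal{D}_{\beta}$, so that (as noted after Theorem \ref{gateaux}, where $1-p<\beta$ forces every member of $\mathcal{D}_{\beta}$ to satisfy the integrability conditions \eqref{mak}) the direction $\lambda'=D\lambda^{H}$ is admissible in the G\^ateaux formula. This well-posedness check is the only integrability point that needs verifying, and it is immediate from Proposition \ref{diff_H}.

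Next I would identify the Fr\'echet derivative $D\Phi(\lambda^{H})$ as a continuous linear functional on $\mathcal{D}_{\beta}$. For a Fr\'echet-differentiable map the Fr\'echet and G\^ateaux derivatives coincide, so $D\Phi(\lambda^{H})$ must act on any direction $\lambda'\in\mathcal{D}_{\beta}$ precisely as the G\^ateaux derivative computed in Theorem \ref{gateaux}, namely
\begin{equation*}
D\Phi(\lambda^{H})[\lambda']=pu^{\lambda^{H}}\,\mathbb{E}\left[\frac{d\tilde{\mathbb{P}}^{\lambda^{H}}}{d\mathbb{P}}\int_0^T \lambda'_t\,dR^{\lambda^{H}}_t\right].
\end{equation*}
The chain rule then gives $\frac{d}{dH}u^{H}(1)=D\Phi(\lambda^{H})\bigl[D\lambda^{H}\bigr]$, and substituting $\lambda'=D\lambda^{H}$ into the displayed functional yields exactly the asserted expression.

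I do not expect a substantial analytic obstacle here: once the two Fr\'echet statements are in place the computation is essentially mechanical. The single conceptual point demanding care — and the reason this argument could not be run with G\^ateaux differentiability alone — is that $\upsilon$ is a \emph{nonlinear} function of $H$, so the chain rule genuinely requires the stronger Fr\'echet differentiability of $\Phi$ supplied by Theorem \ref{frechet}; this is precisely the gain emphasized in Remark \ref{arci}.
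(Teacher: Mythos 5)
Your proposal is correct and follows essentially the same route as the paper, which states the theorem immediately after the sentence ``Putting together Theorem \ref{frechet} and Proposition \ref{diff_H}\dots'', i.e.\ it is precisely the chain rule for Fr\'echet derivatives applied to the composition of $H\mapsto\lambda^H$ with $\lambda\mapsto u^{\lambda}$, the derivative being identified with the G\^ateaux formula of Theorem \ref{gateaux} evaluated in the direction $D\lambda^H$. Your explicit admissibility check that $D\lambda^H\in\mathcal{D}_{\beta}$ satisfies \eqref{mak} (via $1-p<\beta$) is exactly the observation the paper records just after Theorem \ref{gateaux}, and your closing remark on why G\^ateaux differentiability alone would not suffice mirrors Remark \ref{arci}.
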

The most interesting consequence of this result is perhaps the case $H = 1/2$. 
\begin{cor}\label{expan_1/2}
At $H = \frac{1}{2}$, we have 
$$
u^{\frac{1}{2}+ \varepsilon} - u^{\frac{1}{2}} = \varepsilon pu^{\lambda^{\frac{1}{2}}} \mathbb{E}\left[ \frac{d\tilde{\mathbb{P}}^{\lambda^{\frac{1}{2}}}}{d\mathbb{P}}\int\limits_0^T {(D\lambda^{\frac{1}{2}}_t)dR^{\lambda^{\frac{1}{2}}}_t} \right]  + o(\varepsilon),$$ 
and the first order term could be evaluated numerically.
\end{cor}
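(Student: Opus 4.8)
The plan is to deduce the corollary directly from the (unnumbered) theorem that immediately precedes it, which already asserts that $H\mapsto u^H(1)$ is differentiable on $(0,1)$ with derivative $pu^{\lambda^{H}}\mathbb{E}[\frac{d\tilde{\mathbb{P}}^{\lambda^{H}}}{d\mathbb{P}}\int_0^T (D\lambda^H_t)\,dR^{\lambda^{H}}_t]$. That theorem is itself the composition, via the chain rule, of Proposition \ref{diff_H} (Fr\'echet differentiability of $H\mapsto\lambda^H$ into $\mathcal{D}_\beta$, with derivative $D\lambda^H$ from \eqref{d_lambda}) with Theorem \ref{frechet} (Fr\'echet differentiability of $\lambda\mapsto u^\lambda(1)$). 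As stressed in Remark \ref{arci}, it is essential here that both maps be \emph{Fr\'echet}, not merely G\^ateaux, differentiable, since $H\mapsto\lambda^H$ traces a genuinely non-linear curve in $\mathcal{D}_\beta$.

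Granting this, the corollary is simply the first-order Taylor expansion of the scalar function $H\mapsto u^H(1)$ at the interior point $H=\tfrac12$. First I would specialise the derivative formula to $H=\tfrac12$, so that $D\lambda^{1/2}$ is obtained by setting $H=\tfrac12$ in \eqref{d_lambda}. Then, by the definition of differentiability of a real function of a real variable,
$$u^{1/2+\varepsilon}-u^{1/2}=\varepsilon\, pu^{\lambda^{1/2}}\,\mathbb{E}\left[\frac{d\tilde{\mathbb{P}}^{\lambda^{1/2}}}{d\mathbb{P}}\int_0^T (D\lambda^{1/2}_t)\,dR^{\lambda^{1/2}}_t\right]+o(\varepsilon),$$
which is precisely the claimed expansion. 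No new estimate is required at this stage: all the analytic content has already been discharged in Theorem \ref{frechet} and Proposition \ref{diff_H}.

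It then remains to argue that the first-order coefficient is numerically accessible precisely at $H=\tfrac12$. The point is that $B^{1/2}$ is an ordinary Brownian motion, so the benchmark market price of risk $\lambda^{1/2}$ is a classical Markovian Ornstein--Uhlenbeck process; the model at $H=\tfrac12$ therefore lies in the well-understood Markovian realm, where the dual optimiser $\widehat{Y}^{\lambda^{1/2}}$, the value $u^{\lambda^{1/2}}$, and the measure $\tilde{\mathbb{P}}^{\lambda^{1/2}}$ admit tractable descriptions, while the direction $D\lambda^{1/2}$ is explicit through $\partial_H K^H$ at $H=\tfrac12$ (the relevant integrals being of the $C_1,C_2$ type introduced above). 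One can then approximate the expectation by Monte Carlo simulation of the Brownian benchmark.

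The hard part will not be the expansion, which is immediate, but rather the explicit identification of $D\lambda^{1/2}$: this forces one to evaluate $\partial_H K^H(t,s)$ at $H=\tfrac12$, where logarithmic terms appear in the kernel, and to confirm that the resulting process lies in $\mathcal{D}_\beta$ and integrates against $R^{\lambda^{1/2}}$. That well-posedness, however, is already guaranteed by Proposition \ref{diff_H}, so the remaining work is purely computational.
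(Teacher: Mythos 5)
Your proposal is correct and follows essentially the same route as the paper: the expansion is read off as the first-order Taylor formula for the scalar map $H \mapsto u^H(1)$, whose differentiability comes from composing Proposition \ref{diff_H} with Theorem \ref{frechet}, and the numerical accessibility rests on the Markovian (Kim--Omberg) structure of the benchmark at $H=\tfrac12$. The only difference is presentational: the paper actually carries out the ``purely computational'' step you defer, writing $D\lambda^{1/2}_t$ explicitly as a Wiener integral with logarithmic integrands (e.g.\ the term $C(\tfrac12)\int_0^t \ln(t-s)\,dW_s$), and cites \cite{kim-omberg} for the explicitness of $u^{1/2}$ and $\tilde{\mathbb{P}}^{\lambda^{1/2}}$ modulo ODEs, which you invoke in substance without naming.
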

\begin{proof}
First, we compute the derivative of $\lambda^{H}$ at $H = \frac{1}{2}$,
\begin{align*}
D\lambda^{\frac{1}{2}}_t = C\left( \frac{1}{2}\right) \int_{-\infty}^0{ \ln(1 + t/s) dW_s} + \int_0^t{ \left( \partial_H C\left(\frac{1}{2} \right) + C\left(\frac{1}{2}\right)  \ln(t-s) \right) dW_s.} 
\end{align*}
Second, plugging this derivative into the G\^ateaux derivative of $u$ given in Theorem \ref{gateaux} gives us the derivative of $u$ at $H=1/2$:
\begin{equation}\label{der} 
pu^{\lambda^{\frac{1}{2}}} \mathbb{E}\left[ \frac{d\tilde{\mathbb{P}}^{\lambda^{\frac{1}{2}}}}{d\mathbb{P}}\int\limits_0^T {(D\lambda^{\frac{1}{2}}_t)dR^{\lambda^{\frac{1}{2}}}_t} \right]. 
\end{equation}
Recall that, by results of \cite{kim-omberg}, $u^{1/2}$ and $\tilde{\mathbb{P}}^{\lambda^{1/2}}$ 
are explicit modulo solving ordinary differential equations. Hence the first order term can indeed be calculated
with appropriate Monte Carlo simulations.
\end{proof}
The expansion in Corollary \ref{expan_1/2} gives us an approximation for the value function near 
$H = \frac{1}{2}$. As the optimization with $H \ne \frac{1}{2}$ is no longer Markovian and an efficient 
approach for finding its optimizer is unavailable, the above approximations are valuable. 

%In contrast, the problem with $H = \frac{1}{2}$ is Markovian and hence, the dynamic programming method is applicable. As a result, we find an approximated solution for the optimization problem when $H$ is near $\frac{1}{2}$.

\subsection{Model 2} \label{model2}
In this subsection, we consider the case of memory in log volatility. The following model is suggested in \cite{gatheral2014volatility}
\begin{align*}
S^H_t &= \mathcal{E}(R^H)_t,\\
dR^H_t &= \mu(R^H_t)dt + \rho e^{\sigma^H_t}dW_t + \sqrt{1-\rho ^2}e^{\sigma^H_t}dB_t,\\
d\sigma^H_t &= -\alpha \sigma^H_t + dB^H_t.
\end{align*}
First, let us define an auxiliary market
\begin{align*}
\tilde{S}^H_t &= \mathcal{E}(\tilde{R}^H)_t,\\
d\tilde{R}^H_t &= \frac{\mu(R^H_t)}{e^{\sigma^H_t}}dt + \rho dW_t + \sqrt{1-\rho ^2}dB_t.
\end{align*}
This auxiliary market shifts the problematic factors involving long memory of the log volatility into the drift, which makes it possible to apply expansion results. To begin with, we will prove that this shifting argument preserve the structure of admissible strategies in the original market.
\begin{lemma}\label{auxiliary_strategy}
An strategy $\pi$ is admissible for the $S^H$-market if and only if $\tilde{\pi}:=\pi e^{\sigma^H}$ is an admissible strategy for the $\tilde{S}^H$-market. Furthermore, it holds that $X^{\pi} = X^{\tilde{\pi}}.$
\end{lemma}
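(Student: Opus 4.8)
The plan is to verify both assertions by direct computation, exploiting that the substitution $\tilde{\pi} = \pi e^{\sigma^H}$ is designed exactly to absorb the volatility factor $e^{\sigma^H}$ distinguishing the two markets. First I would isolate the continuous local martingale parts of the two returns. Writing $R^H = A^H + M^H$ with $A^H_t := \int_0^t \mu(R^H_s)\,ds$ and $M^H_t := \int_0^t \rho e^{\sigma^H_s}\,dW_s + \int_0^t \sqrt{1-\rho^2}\, e^{\sigma^H_s}\,dB_s$, the independence of $W$ and $B$ gives $d\langle M^H\rangle_t = e^{2\sigma^H_t}\,dt$. Similarly $\tilde{R}^H = \tilde{A}^H + \tilde{M}$ with $d\tilde{M}_t = \rho\,dW_t + \sqrt{1-\rho^2}\,dB_t$ and $d\langle \tilde{M}\rangle_t = dt$, so the auxiliary market fits the framework of Section \ref{sec_setting} with driving martingale $\tilde{M}$ and market price of risk $\mu(R^H)e^{-\sigma^H}$.

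For the admissibility equivalence I would compare the two defining integrability conditions. By definition $\pi \in \mathcal{P}^2_{M^H}$ means $\int_0^T \pi_t^2\,d\langle M^H\rangle_t < \infty$ a.s., whereas $\tilde{\pi} \in \mathcal{P}^2_{\tilde{M}}$ means $\int_0^T \tilde{\pi}_t^2\,d\langle \tilde{M}\rangle_t < \infty$ a.s. Inserting $\tilde{\pi} = \pi e^{\sigma^H}$ and the two quadratic variations above yields
$$\int_0^T \tilde{\pi}_t^2\,d\langle \tilde{M}\rangle_t = \int_0^T \pi_t^2 e^{2\sigma^H_t}\,dt = \int_0^T \pi_t^2\,d\langle M^H\rangle_t,$$
so the two conditions are literally identical. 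Since $e^{\sigma^H} > 0$, the map $\pi \mapsto \pi e^{\sigma^H}$ is a bijection (with inverse $\tilde{\pi} \mapsto \tilde{\pi} e^{-\sigma^H}$), and progressive measurability is preserved because $\sigma^H$ is adapted and continuous; this establishes the ``if and only if''.

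For the equality of wealth I would show that the exponents of the two stochastic exponentials coincide, i.e. $\int_0^{\cdot} \pi_t\,dR^H_t = \int_0^{\cdot}\tilde{\pi}_t\,d\tilde{R}^H_t$, from which $X^{\pi} = \mathcal{E}\left(\int_0^{\cdot}\pi_t\,dR^H_t\right) = \mathcal{E}\left(\int_0^{\cdot}\tilde{\pi}_t\,d\tilde{R}^H_t\right) = X^{\tilde{\pi}}$ for any common initial capital. Splitting each return into its drift and martingale parts and substituting $\tilde{\pi} = \pi e^{\sigma^H}$, the drift contribution of the auxiliary integral becomes $\int_0^t \pi_s e^{\sigma^H_s}\frac{\mu(R^H_s)}{e^{\sigma^H_s}}\,ds = \int_0^t \pi_s\mu(R^H_s)\,ds$, while its martingale contribution becomes $\int_0^t \pi_s e^{\sigma^H_s}\left(\rho\,dW_s + \sqrt{1-\rho^2}\,dB_s\right) = \int_0^t \pi_s\,dM^H_s$. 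Hence both integrals reduce, term by term, to $\int_0^t \pi_s\,dR^H_s$, and the identity of wealth processes follows.

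The computations are elementary, and the only point calling for care is the justification that all the integrals involved are well defined and that the substitution may legitimately be carried out inside them. This is where the (mild) main difficulty lies: one has to invoke that $\sigma^H$ has continuous, hence pathwise bounded on $[0,T]$, trajectories, so that $e^{\pm\sigma^H}$ are locally bounded and adapted. This guarantees both that $\tilde{\pi}$ inherits from $\pi$ the measurability and square-integrability against $\tilde{M}$ required for the stochastic integral, and that $\mu(R^H)e^{-\sigma^H}$ is an admissible integrand for the drift part. Granting this, the lemma reduces to the two displayed identities.
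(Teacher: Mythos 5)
Your proof is correct and follows essentially the same route as the paper's: both reduce admissibility in the two markets to the single condition $\int_0^T \pi_t^2 e^{2\sigma^H_t}\,dt < \infty$ (since $d\langle M^H\rangle_t = e^{2\sigma^H_t}\,dt$ and $d\langle \tilde{M}\rangle_t = dt$) and obtain $X^{\pi} = X^{\tilde{\pi}}$ from the pathwise identity $\tilde{\pi}_t\,d\tilde{R}^H_t = (\pi_t e^{\sigma^H_t})(e^{-\sigma^H_t}\,dR^H_t) = \pi_t\,dR^H_t$. The paper's proof is simply a compressed version of yours, leaving the quadratic-variation computation, the bijectivity of $\pi \mapsto \pi e^{\sigma^H}$, and the measurability remarks implicit.
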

\begin{proof}
Since $\pi$ is an admissible strategy in the $S^H$-market, one  has easily that $\int\limits_0^T {\pi^2_te^{2\sigma^H_t}dt} < +\infty.$
This is also the requirement for the admissibility of the strategy $\pi e^{\sigma^H}$ in the $\tilde{S}^H$-market. A straightforward computation shows that
$$dX^{\pi}_t = \pi_t dR^H_t = (\pi e^{\sigma^H_t})(e^{-\sigma^H_t}dR^H_t) = \tilde{\pi}_t d\tilde{R}^H_t = dX^{\tilde{\pi}}_t,$$
and this implies the second conclusion.
\end{proof}
As a result, Lemma \ref{auxiliary_strategy} tells us that the solution of the optimization problem in the $S^H$-market is also the solution of that problem in the $\tilde{S}^H$-market, which can be studied along the lines of Section \ref{model_1}. For simplicity, we assume $\mu:=1$.
\begin{pro}
The mapping $H \mapsto e^{-\sigma^H_t}$ is Fr\'echet differentiable and its derivative is the process $-e^{-\sigma^H_t}D\lambda^H_t$, where $D\lambda^H_t$ is exactly as in (\ref{d_lambda}). 
\end{pro}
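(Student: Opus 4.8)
The plan is to combine Proposition \ref{diff_H} with a chain-rule argument. Since $\sigma^H$ obeys exactly the fractional Ornstein--Uhlenbeck dynamics of $\lambda^H$ from Subsection \ref{model_1}, the map $H\mapsto\sigma^H$ is already Fr\'echet differentiable into $\mathcal{D}_{\beta}$ with derivative $D\lambda^H$, so what remains is to differentiate the nonlinear superposition $x\mapsto e^{-x}$ at the level of the $\|\cdot\|_{\beta}$-norm. Recalling that in the auxiliary market $d\langle M\rangle_t=dt$, I would first check well-definedness, i.e. $e^{-\sigma^H}\in\mathcal{D}_{\beta}$: by the power-mean inequality $\|e^{-\sigma^H}\|_{\beta}^{2\beta}\le c\int_0^T\mathbb{E}[e^{-2\beta\sigma^H_t}]\,dt$, which is finite, and bounded uniformly for $H$ in a neighborhood, because $\sigma^H_t$ is Gaussian with variance bounded on $[0,T]$.

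The core estimate rests on a second-order Taylor expansion of the exponential. Writing $a=\sigma^{H+\varepsilon}_t$ and $b=\sigma^H_t$, I use $e^{-a}-e^{-b}=-e^{-b}(a-b)+\tfrac12 e^{-\xi_t}(a-b)^2$ with $\xi_t$ between $a$ and $b$, so that
\begin{align*}
e^{-\sigma^{H+\varepsilon}_t}-e^{-\sigma^H_t}+\varepsilon e^{-\sigma^H_t}D\lambda^H_t
&=-e^{-\sigma^H_t}\bigl(\sigma^{H+\varepsilon}_t-\sigma^H_t-\varepsilon D\lambda^H_t\bigr)\\
&\quad+\tfrac12 e^{-\xi_t}\bigl(\sigma^{H+\varepsilon}_t-\sigma^H_t\bigr)^2.
\end{align*}
It then suffices to show that the $\|\cdot\|_{\beta}$-norm of each of the two terms on the right is $o(\varepsilon)$ as $\varepsilon\to 0$, which is exactly the defining property \eqref{diff} of the claimed Fr\'echet derivative.

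For the first (linear-error) term I would apply the power-mean inequality to bring the time integral inside the expectation, and then H\"older's inequality with a conjugate pair of exponents to decouple the Gaussian factor $e^{-2\beta\sigma^H_t}$ from $|\sigma^{H+\varepsilon}_t-\sigma^H_t-\varepsilon D\lambda^H_t|^{2\beta}$. The exponential factor contributes a finite constant uniform in $t$, while the remaining Gaussian moment is controlled by the variance of $(\sigma^{H+\varepsilon}_t-\sigma^H_t-\varepsilon D\lambda^H_t)/\varepsilon$; this is precisely the quantity shown to vanish, with an integrable dominating bound, in the proof of Proposition \ref{diff_H}, so dominated convergence yields $o(\varepsilon)$. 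For the second (quadratic) term I would bound $e^{-\xi_t}\le e^{-\sigma^H_t}+e^{-\sigma^{H+\varepsilon}_t}$ and argue identically; here the Gaussian moment of $(\sigma^{H+\varepsilon}_t-\sigma^H_t)^2$ is $O(\varepsilon^2)$ uniformly in $t$, so this term is $O(\varepsilon^2)$ in $\|\cdot\|_{\beta}$ and hence negligible after division by $\varepsilon$.

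The main obstacle is the unboundedness of $x\mapsto e^{-x}$: one must control $e^{-\xi_t}$ at the intermediate point $\xi_t$ uniformly in $t\in[0,T]$ and for $H+\varepsilon$ ranging over a neighborhood of $H$. This is handled by exploiting that $\sigma^H$ is a Gaussian process, a linear functional of $W$, so all of its exponential moments are finite and its variance stays bounded on the compact interval; the H\"older step that separates these exponential moments from the differentiability estimate of Proposition \ref{diff_H} is the crux of the argument. Assembling the two bounds gives $\tfrac{1}{|\varepsilon|}\|e^{-\sigma^{H+\varepsilon}}-e^{-\sigma^H}+\varepsilon e^{-\sigma^H}D\lambda^H\|_{\beta}\to 0$, which is the asserted Fr\'echet differentiability with derivative $-e^{-\sigma^H}D\lambda^H$.
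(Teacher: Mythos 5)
Your proof is correct, but it takes a genuinely different route from the paper's. The paper stops at a \emph{first-order} mean value theorem, writing $\varepsilon^{-1}\bigl(e^{-\sigma^{H+\varepsilon}_t+\sigma^H_t}-1\bigr)=-e^{\tilde{\varepsilon}(\sigma^H_t-\sigma^{H+\varepsilon}_t)}\,\varepsilon^{-1}\bigl(\sigma^{H+\varepsilon}_t-\sigma^H_t\bigr)$, observes that the normalized error converges to zero in $\mathbb{P}\times\mathrm{Leb}$-measure, and concludes by a uniform-integrability (Vitali-type) argument: it bounds the $2\beta'$-moment for some $\beta'>\beta$, using H\"older to decouple the exponential factor (controlled via the Gaussian moment generating function) from the Gaussian difference quotient (controlled as in Proposition \ref{diff_H}). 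You instead expand $e^{-x}$ to \emph{second order}, splitting the error into the linear part $-e^{-\sigma^H_t}\bigl(\sigma^{H+\varepsilon}_t-\sigma^H_t-\varepsilon D\lambda^H_t\bigr)$, killed by the estimates of Proposition \ref{diff_H} plus dominated convergence, and a Lagrange remainder shown to be $O(\varepsilon^2)$ in $\Vert\cdot\Vert_{\beta}$. Both arguments share the same crux, namely the H\"older separation of uniformly bounded lognormal moments from Gaussian moment estimates, but yours is more quantitative (it yields an explicit $O(\varepsilon^2)$ remainder and hence rate information), while the paper's UI argument gets by with weaker inputs: for its purposes an $O(1)$ bound on $\mathrm{Var}(\sigma^{H+\varepsilon}_t-\sigma^H_t)$ suffices.

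One step you should make explicit: your claim that $\mathbb{E}\bigl[(\sigma^{H+\varepsilon}_t-\sigma^H_t)^2\bigr]\le C\varepsilon^2$ uniformly in $t\in[0,T]$ does \emph{not} follow from the variance bound displayed in the paper's proof, which is only $O(1)$. It does follow, however, from the mean value theorem applied at the level of the kernel, $K^{H+\varepsilon}(t,s)-K^H(t,s)=\varepsilon\,\partial_H K^{H+\tilde{\varepsilon}(t,s)}(t,s)$, combined with the uniform bounds on $\int_{\mathbb{R}}|\partial_H K^{H+\tilde{\varepsilon}}(t,s)|^2\,ds$ established in the proof of Lemma \ref{lem:limit2} (note that $t^{2H-2\delta}\bigl(1+(\ln t)^2\bigr)$ is bounded on $(0,1)$ whenever $\delta<H$, so the bound is uniform on $[0,T]$); the convolution with $e^{-\alpha(t-u)}$ in the Ornstein--Uhlenbeck formula preserves this bound. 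With that supplied, and noting that replacing the exponent $2\beta$ by $2\beta q$ in your H\"older step is harmless because all centered Gaussian moments are constants times powers of the variance, your argument is complete and delivers the stated derivative $-e^{-\sigma^H}D\lambda^H$.
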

\begin{proof}
We will prove that $ \varepsilon^{-1} || e^{-\sigma^{H + \varepsilon} }- e^{-\sigma^H} + \varepsilon e^{-\sigma^H} D\lambda^H ||_{\beta} \to 0$ as $\varepsilon$ tends to zero,
or equivalently,
$$ \mathbb{E}\left[ \left(  \int_0^T{\left| \frac{e^{-\sigma^{H + \varepsilon}_t }- e^{-\sigma^H_t}}{\varepsilon} +  e^{-\sigma^H_t} D \lambda^H_t\right| ^2dt}\right)^{\beta} \right]  \to 0.$$
Since the quantity $\varepsilon^{-1}(e^{-\sigma^{H + \varepsilon}_t } - e^{-\sigma^H_t} ) -  e^{-\sigma^H_t} D\lambda^H_t$ convergences to zero in $\mathbb{P} \times Leb$, it suffices to check its uniform integrability to establish the above convergence. The mean value theorem implies that
$$\frac{e^{-\sigma^{H + \varepsilon}_t + \sigma^H_t }- 1}{\varepsilon} =- e^{\tilde{\varepsilon}(-\sigma^{H + \varepsilon}_t + \sigma^H_t)} \frac{\sigma^{H + \varepsilon}_t - \sigma^H_t}{\varepsilon}$$
for some $|\tilde{\varepsilon}| < 1$. Therefore, for some $\beta' > \beta$, one estimates using H\"older's inequality,
\begin{align*}
\mathbb{E}\left[\left| \frac{e^{-\sigma^{H + \varepsilon}_t -\sigma^H_t }- 1}{\varepsilon} \right|^{2\beta'} \right] 
\le \mathbb{E} \left[ e^{4\beta'\tilde{\varepsilon}(-\sigma^{H + \varepsilon}_t + \sigma^H_t)} \right]^{1/2} \mathbb{E}\left[\left|   \frac{\sigma^{H + \varepsilon}_t - \sigma^H_t}{\varepsilon} \right|^{4\beta'} \right]^{1/2}.
\end{align*}
We notice that $\sigma^{H + \varepsilon}_t - \sigma^H_t$ is a Gaussian random variable with zero mean. An upper bound for its variance is found as follows
\begin{align*}
\mathbb{E}[(\sigma^{H + \varepsilon}_t - \sigma^H_t)^2] \le 2 \mathbb{E}\left[ (B^{H + \varepsilon}_t - B^{H}_t)^2 \right] + 2 \mathbb{E}\left[ \alpha^2\left( \int_0^t {e^{-\alpha(t-u)} (B^{H+\varepsilon}_u - B^H_u) du } \right)^2  \right] \\
\le 2 \mathbb{E}\left[ (B^{H + \varepsilon}_t - B^{H}_t)^2 \right]  +  2 \alpha^2 \int_0^t{e^{-2\alpha(t-u)}du}  \int_0^t{\mathbb{E}\left[ (B^{H+\varepsilon}_u - B^H_u)^2 \right] du }.
\end{align*}
Using Lemma \ref{lemma:compute_integral}, we obtain 
\begin{align*}
\mathbb{E}\left[ (B^{H + \varepsilon}_t - B^{H}_t)^2 \right] \le 2 \int_{\mathbb{R}} { |K^{H + \varepsilon}(t,s)|^2 + |K^{H}(t,s)|^2 ds}\le c(t^{2H + 2\varepsilon} + t^{2H}). 
\end{align*}
The variance is bounded by 
$$ c \left( t^{2H + 2 \varepsilon} + t^{2H} + \alpha(1-e^{-2\alpha t}) \left( \frac{t^{2H + 2\varepsilon}}{2H + 2\varepsilon + 1} + \frac{t^{2H+1}}{2H+1}  \right)  \right)  .$$
By using the moment generating function of a normal distribution, we find an upper bound for $\mathbb{E}[e^{4\beta' \tilde{\varepsilon} (-\sigma^{H + \varepsilon}_t +  \sigma^H_t)}]$ and hence, it holds that $$\sup_{|\varepsilon| < \delta} \int_0^T {\mathbb{E}\left[ e^{4\beta'\tilde{\varepsilon}(-\sigma^{H + \varepsilon}_t + \sigma^H_t)} \right] dt} < \infty.$$ One can prove, similarly to the argument in the proof of Proposition \ref{diff_H}, that $$\sup_{|\varepsilon| < \delta} \int_0^T {\mathbb{E} \left[\left|   \frac{\sigma^{H + \varepsilon}_t - \sigma^H_t}{\varepsilon} \right|^{4\beta'} \right]dt} < \infty.$$
 The uniform integrability is then satisfied, $$\sup_{|\varepsilon| < \delta} \mathbb{E}\left[ \int_0^T{\left| \frac{e^{-\sigma^{H + \varepsilon}_t }- e^{-\sigma^H_t}}{\varepsilon} \right|^{2\beta'}dt } \right] < \infty,$$
and thus the proof is complete.
\end{proof}

As usual, we define $u^{H}(x):= u^{e^{-\sigma^H}}(x)$ for the $\tilde{S}$-market. In Model 2, $u^{1/2}$ is not known explicitly but it seems possible to calculate it using the
corresponding Hamilton-Jacobi-Bellmann equation (recall that $H=1/2$ corresponds to a Markovian
optimal investment problem). We do not treat such ramifications in the present paper, neither do we
pursue the numerical evaluation of \eqref{der} above. These are left for future research.

\section{Further applications}\label{sec_further}

Fix $\mu\in\mathbb{R}$ and consider the process $\lambda^{\varepsilon}:=\mu+\nu^{\varepsilon}$ where
\begin{align*}
d\nu^{\varepsilon}_t = -\frac{1}{\varepsilon} \nu^{\varepsilon}_t dt + dW_t,\ \nu^{\varepsilon}_0=0,
\end{align*}
for $\varepsilon>0$, $W$ is standard Brownian motion and set $\nu^0_t:=0$,
$\lambda^0_t:=\mu$, $t\in [0,T]$.

When $M$ is a Brownian motion (possibly correlated with $W$), this choice of drift corresponds to 
continuous-time FADS models, see \cite{paolo}, where mean-reversion is intensifying to infinity as 
$\varepsilon\to 0$ (thus the limit is the constant drift $\mu$). 

Results of Section \ref{sec_frechet} enable us to get an asymptotic result on the value of
$u^{\varepsilon}(x):=u^{\lambda^{\varepsilon }}(x)$ when $\varepsilon\to 0$, i.e. when mean reversion is strong.

\begin{pro}\label{fd} Let $M$ be such that $d\langle M\rangle_t\leq G\,dt$ with some constant $G>0$. 
	Then, for each $0<\delta<1/2$, 
	\begin{equation}\label{palmira}
	u^{\varepsilon}-u^0=o(\varepsilon^{\delta}).
	\end{equation}
\end{pro}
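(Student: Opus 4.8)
The plan is to combine the Fréchet differentiability established in Theorem \ref{frechet} with an elementary Gaussian moment estimate for the Ornstein--Uhlenbeck process $\nu^{\varepsilon}$. The key observation is that $\lambda^{\varepsilon}$ is a perturbation of the constant drift $\lambda^0\equiv\mu$ by $\nu^{\varepsilon}$, and that $\nu^{\varepsilon}\to 0$ in $\mathcal{D}_{\beta}$ as $\varepsilon\to 0$; Theorem \ref{frechet} then controls $u^{\varepsilon}-u^0$ linearly in $\Vert\nu^{\varepsilon}\Vert_{\beta}$, so everything reduces to estimating this norm.

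First I would check that the relevant drifts lie in $\mathcal{D}_{\beta}$. Since $d\langle M\rangle_t\le G\,dt$, we have $\Vert\mu\Vert_{\beta}^{2\beta}=\mu^{2\beta}\mathbb{E}[\langle M\rangle_T^{\beta}]\le\mu^{2\beta}(GT)^{\beta}<\infty$, so $\lambda^0\in\mathcal{D}_{\beta}$, while the norm bound below shows $\nu^{\varepsilon}\in\mathcal{D}_{\beta}$ and hence $\lambda^{\varepsilon}\in\mathcal{D}_{\beta}$. Because $\lambda\mapsto u^{\lambda}$ is Fréchet differentiable at $\lambda^0$, it is locally Lipschitz there: there are $C>0$ and $r>0$ with $|u^{\lambda^0+h}-u^{\lambda^0}|\le C\Vert h\Vert_{\beta}$ whenever $\Vert h\Vert_{\beta}\le r$. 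Taking $h=\nu^{\varepsilon}$ (which lies in the ball of radius $r$ once $\varepsilon$ is small) reduces the whole statement to proving $\Vert\nu^{\varepsilon}\Vert_{\beta}=O(\varepsilon^{1/2})$.

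For that norm estimate I would use the explicit representation $\nu^{\varepsilon}_t=\int_0^t e^{-(t-s)/\varepsilon}\,dW_s$, a centered Gaussian variable with variance $\sigma_t^2=\tfrac{\varepsilon}{2}\bigl(1-e^{-2t/\varepsilon}\bigr)\le\varepsilon/2$. Using $d\langle M\rangle_t\le G\,dt$, then Minkowski's integral inequality for the $L^{\beta}(\mathbb{P})$-norm, and finally the Gaussian moment identity $\mathbb{E}[(\nu^{\varepsilon}_t)^{2\beta}]=c_{\beta}\sigma_t^{2\beta}$ with $c_{\beta}=2^{\beta}\Gamma(\beta+\tfrac12)/\sqrt{\pi}$, I would obtain
\begin{equation*}
\Vert\nu^{\varepsilon}\Vert_{\beta}^{2\beta}\le G^{\beta}\,\mathbb{E}\!\left[\Big(\int_0^T(\nu^{\varepsilon}_t)^2\,dt\Big)^{\beta}\right]\le G^{\beta}\Big(\int_0^T \mathbb{E}[(\nu^{\varepsilon}_t)^{2\beta}]^{1/\beta}\,dt\Big)^{\beta}\le G^{\beta}c_{\beta}\Big(\tfrac{\varepsilon T}{2}\Big)^{\beta},
\end{equation*}
so that $\Vert\nu^{\varepsilon}\Vert_{\beta}\le C'\varepsilon^{1/2}$. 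Combined with the local Lipschitz bound this gives $|u^{\varepsilon}-u^0|\le CC'\varepsilon^{1/2}$, and since $\varepsilon^{1/2}=o(\varepsilon^{\delta})$ for every $\delta<1/2$, the claim \eqref{palmira} follows.

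The conceptual content is almost entirely in the two inputs, and the only genuine computation is the $\beta$-th moment bound for $\int_0^T(\nu^{\varepsilon}_t)^2\,dt$. The one subtlety worth flagging is why the statement reads $o(\varepsilon^{\delta})$ rather than $o(\varepsilon^{1/2})$: the Fréchet derivative $Du^{\lambda^0}$ need not annihilate the direction $\nu^{\varepsilon}$, so $u^{\varepsilon}-u^0$ may be of exact order $\varepsilon^{1/2}$, which is $O(\varepsilon^{1/2})$ but not $o(\varepsilon^{1/2})$; it is nevertheless $o(\varepsilon^{\delta})$ for each $\delta<1/2$. I expect no real obstacle beyond tracking the Gaussian constant $c_{\beta}$ and confirming that the neighborhood of local Lipschitz continuity furnished by Theorem \ref{frechet} eventually contains $\nu^{\varepsilon}$.
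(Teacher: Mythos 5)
Your proof is correct, but it takes a genuinely different route from the paper's. The paper reparametrizes: it sets $\varepsilon=\eta^{\kappa}$ with $\kappa:=1/\delta>2$, proves that the curve $\eta\mapsto\lambda^{\eta}\in\mathcal{D}_{\beta}$ has zero right-hand Fr\'echet derivative at $\eta=0$ (estimating $\mathbb{E}\bigl[\sup_{s\in[0,t]}|N^{\eta}_s|^{2\beta}\bigr]$ for the martingale $N^{\eta}=\int_0^{\cdot}e^{s/\eta^{\kappa}}dW_s$ via the Burkholder--Davis--Gundy inequality), and then invokes the chain rule together with Theorem \ref{frechet} to conclude $u^{\eta}-u^0=o(\eta)=o(\varepsilon^{\delta})$ --- one application of the argument per fixed $\delta$. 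You dispense with the reparametrization entirely: you extract from Fr\'echet differentiability at $\lambda^0$ only the first-order bound $|u^{\lambda^0+h}-u^{\lambda^0}|\le C\Vert h\Vert_{\beta}$ for small $\Vert h\Vert_{\beta}$, and you estimate $\Vert\nu^{\varepsilon}\Vert_{\beta}=O(\varepsilon^{1/2})$ directly from the exact variance $\sigma_t^2=\tfrac{\varepsilon}{2}(1-e^{-2t/\varepsilon})$, Gaussian moments and Minkowski's integral inequality (where the paper bounds the running supremum by BDG; both computations deliver the same rate). Your packaging buys a slightly stronger conclusion, $u^{\varepsilon}-u^0=O(\varepsilon^{1/2})$, which handles all $\delta<1/2$ simultaneously, and your closing remark correctly explains why $o(\varepsilon^{1/2})$ itself is out of reach: the derivative $Du^{\lambda^0}$ need not vanish on the $\varepsilon$-dependent directions $\nu^{\varepsilon}$. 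It is worth making explicit that both arguments genuinely require Fr\'echet rather than G\^ateaux differentiability --- in your version because the normalized direction $\nu^{\varepsilon}/\Vert\nu^{\varepsilon}\Vert_{\beta}$ varies with $\varepsilon$, so a uniform operator-norm bound on the derivative is indispensable; this is exactly the point of Remark \ref{arci}, which the paper's zero-derivative-plus-chain-rule scheme is designed to illustrate. One terminological quibble: differentiability at the single point $\lambda^0$ yields calmness at $\lambda^0$ (comparison of $u^{\lambda^0+h}$ with the base point only), not local Lipschitz continuity between pairs of nearby points; your displayed inequality is exactly the former, which is all your argument uses, so nothing breaks.
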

\begin{re}{\rm 
	When $M=\rho W+\sqrt{1-\rho^2} B$ with a Brownian motion $B$ independent of $W$ and the 
	filtration is the one generated by $(W,B)$ then 
	the optimal utilities $u^{\varepsilon}$ can be explicitly calculated, see \cite{kim-omberg} and \cite{battauz2015kim}. 
	Proposition \ref{fd} derives 
	the asymptotics of the value functions for other choices of $M$ as well, where
	no explicit solution can be expected. We cannot determine the first order expansion
	but we can still estimate the magnitude of the deviation of the optimal utility $u^{\varepsilon}$ from
	that of the benchmark model, $u^0$.}
\end{re}

\begin{proof}[Proof of Proposition \ref{fd}.] Let us consider, instead of $\nu^{\varepsilon}$,
	$$
	d\nu^{\eta}_t = -\frac{1}{\eta^{\kappa}} \nu^{\eta}_t dt + dW_t,\ \nu^{\eta}_0=0,
	$$
	for some $\kappa>2$, setting $\lambda^{\eta}:=\nu^{\eta}+\mu$ and $\nu^0:=0$, 
	$\lambda^0:=\mu$.
	The explicit formula for $\nu^{\eta}$ is 
	\begin{equation}
	\nu^{\eta}_t = e^{- t/\eta^{\kappa} }\int\limits_0^t {e^{s/\eta^{\kappa}}dW_s}.
	\end{equation}
	We prove that the (Fr\'echet) right-hand derivative of $\eta\to \nu^{\eta}\in\mathcal{D}_{\beta}$ is $0$, i.e.
	\begin{align*}
	\left\| \frac{\nu^{\eta} - \nu^0}{\eta} \right\|^{2 \beta}_{\beta} &= 
	\mathbb{E}\left[ \left(\int\limits_0^T {\frac{(\nu^{\eta}_t)^2}{\eta^2} d\langle M\rangle_t} 
	\right)^{\beta} \right] &\leq 
G^{\beta}\mathbb{E}\left[ \left(\int\limits_0^T {\frac{(\nu^{\eta}_t)^2}{\eta^2} dt} 
	\right)^{\beta} \right] 
\to 0,\ \eta\to 0.
	\end{align*}
	In the sequel $C>0$ denotes a constant whose value may change from line to line.
	Denoting by $N^{\eta}$ the martingale $\int_0^{\cdot} {e^{s/\eta^{\kappa}}dW_s}$, we have
	\begin{align*}
	\mathbb{E}\left[ \left(\int\limits_0^T {(\nu^{\eta}_t)^2 dt} \right)^{\beta} \right] &= 
	\mathbb{E}\left[ \left(\int\limits_0^T 
	{e^{-2t/\eta^{\kappa}}( N^{\eta}_t)^2 dt} \right)^{\beta} \right]\\
	&\le \mathbb{E}\left[ \left(\int\limits_0^T {e^{-2t/\eta^{\kappa}}\sup_{s\in [0,t]}|N^{\eta}_s|^2 dt} \right)^{\beta} \right]\\
	&\le C\mathbb{E}\left[\int\limits_0^T {e^{-2\beta t/\eta^{\kappa}}\sup_{s\in [0,t]}|N^{\eta}_s|^{2\beta} dt}  \right]\\
	&\le C \int\limits_0^T {e^{-2\beta t/\eta^{\kappa}}\mathbb{E}\left[\sup_{s\in [0,t]}|N^{\eta}_s|^{2\beta} \right] dt}  .
	\end{align*}
	The Burkholder-Davis-Gundy inequality shows that
	$$\mathbb{E}\left[ \sup_{ s \in [0,t]}\left| N^{\eta}_s \right|^{2\beta} \right] \le C \left( \int\limits_0^t 
	{e^{2s/\eta^{\kappa}}ds}\right)^{\beta} = C \left( \frac{ e^{2t/\eta^{\kappa}} - 1}{ 2/\eta^{\kappa} } \right)^{\beta}$$
	hence
	\begin{align*}
	\mathbb{E}\left[ \left(\int\limits_0^T {(\nu^{\eta}_t)^2/\eta^2 dt} \right)^{\beta} \right]
	&\le \frac{C}{\eta^{2\beta}}  \int\limits_0^T {e^{-2\beta t/\eta^{\kappa}} \left( \frac{ e^{2t/\eta^{\kappa}} - 1}{ 2/\eta^{\kappa}} \right)^{\beta}dt} \to 0,
	\end{align*}
	as $\eta$ tends to zero. This clearly entails that the derivative of $\eta\to \lambda^{\eta}$ is also
	$0$.
	Noting the Fr\'echet differentiability of $\lambda\to u^{\lambda}$ we get that
	$u^{\eta}-u^0=o(\eta)$ where $u^{\eta}$ is the optimal utility for the drift process $\nu^{\eta}$. Now apply this with 
	$\kappa:=1/\delta$ and $\varepsilon:=\eta^{\kappa}$ to conclude that
	\eqref{palmira} indeed holds true.
\end{proof}

\section{Conclusions}\label{conci}

Here we discuss certain important ramifications that deserve to be addressed
in more detail. A relevant question is whether one may get an
approximation of the optimal strategy $\pi^{\lambda}$ instead of the value function 
$u^{\lambda}$. As we have already pointed out in Remark \ref{mascot} above,
Subsection 3.2 of \cite{larsen2014expansion} mentions such a result.

The arguments of \cite{larsen2014expansion} work along a specific directional
derivative, i.e. they study $\pi^{\lambda+\varepsilon\lambda'}$ with fixed $\lambda'$ and require technical assumptions that are
not easy to check (they are in terms of $\tilde{\mathbb{P}}^{\lambda}$ about which we 
have little information in general), see Theorem 3.4 of \cite{larsen2014expansion}. 
A natural continuation of our present work would be to strengthen
their arguments so as to yield the ``almost optimality'' of \eqref{lal} uniformly
along all directions $\lambda'$ and preferably under sufficient conditions
that are easy to check.

We shortly explain in a one-period model of a financial market what such an extension
should involve. Let us
try to maximize 
\[
\mathbb{E}[U(\phi R^{\lambda})],
\] 
where, for convenience, we assume that the concave and continuously differentiable
utility function $U$ is defined and finite
on $\mathbb{R}$, $\phi\in\mathbb{R}$ is 
the number of units of the stock that is bought (this represents the portfolio strategy), and
the random variable $R^{\lambda}$ is the return on the given stock, parametrized by some 
$\lambda\in\mathbb{R}$. (We assume that the agent has $0$ initial capital.)

Under appropriate conditions, the optimal strategy $\phi(\lambda)$ satisfies
\[
\mathbb{E}[U'(\phi(\lambda) R^{\lambda})R^{\lambda}]=0.
\]
If the implicit function theorem is applicable then $\lambda\to \phi(\lambda)$
is continuously differentiable and its derivative satisfies
\begin{eqnarray*}
	\partial_{\lambda}\phi(\lambda)\mathbb{E}[U''(\phi(\lambda)R^{\lambda})(R^{\lambda})^2]
	&+& \mathbb{E}[U''(\phi(\lambda) R^{\lambda})\phi(\lambda)R^{\lambda}\partial_{\lambda}R^{\lambda}]\\
	&+&
	\mathbb{E}[U'(\phi(\lambda) R^{\lambda})\partial_{\lambda}R^{\lambda}]=0.
\end{eqnarray*}
From this we can express $\partial_{\lambda}\phi(\lambda)$ and obtain the first-order correction term of $\lambda\to\phi(\lambda)$ for $\lambda$ 
in a neighbourhood
of some fixed $\lambda^*$, i.e. $\phi(\lambda)\approx \phi(\lambda^*)+(\lambda-\lambda^*)
\partial_{\lambda}\phi(\lambda^*)$. Higher order correction terms can be deduced
in an analogous way.

It is quite clear that transferring the above arguments to the continuous-time
setting looks highly non-trivial: what kind of infinite-dimensional Banach spaces should be used for the strategy and for the (drift) parameter so that a similar argument
can go through? From Subsection 3.2 of \cite{larsen2014expansion} we
do know the first-order correction term (though it is not explicit 
enough due to the presence of $\gamma^B$, see \eqref{lal} above) hence it would remain
to establish that the obtained representations hold uniformly in all directions.

An extension of the optimal strategy in the model space thus requires further research effort which has started but certainly leads beyond the scope of the present paper.

\section{Appendix}\label{sec_appen}

In this section, we provide some useful results for fractional Brownian motion.
\begin{lemma}\label{lemma:compute_integral}
The functions $C_1(\alpha), C_2(\alpha)$ are continuous on $\left( -\frac{1}{2}, \frac{1}{2} \right)$. In addition, we have that
\begin{align*}
\int\limits_{\mathbb{R}} { \left( (t-s)_+^{\alpha} - (-s)_+^{\alpha } \right)^2ds} &= t^{2\alpha + 1}C_1(\alpha),\\
\int_{\mathbb{R}}{\left( (t-s)_+^{\alpha} \ln (t-s)_+ - (-s)_+^{\alpha} \ln (-s)_+ \right)^2ds} &\leq 2 t^{2\alpha + 1} (\ln t)^2 C_1(\alpha) +  2 t^{2\alpha + 1} C_2(\alpha).
\end{align*} 
\end{lemma}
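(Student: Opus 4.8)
The plan is to handle all four assertions by one common scaling substitution together with routine integrability estimates. Before anything else I would record that $C_1(\alpha)$ and $C_2(\alpha)$ are finite for $\alpha\in(-1/2,1/2)$, so that the statements are meaningful. Near $s=0$ the integrand of $C_1$ behaves like $s^{2\alpha}$ (coming from the $s^\alpha$ term), which is integrable precisely because $\alpha>-1/2$; near $s=\infty$ the mean value theorem gives $|(1+s)^\alpha-s^\alpha|=|\alpha|\,\xi^{\alpha-1}\le|\alpha|s^{\alpha-1}$ for some $\xi\in(s,s+1)$ (using $\alpha-1<0$), so the integrand decays like $s^{2\alpha-2}$, integrable because $\alpha<1/2$. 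The boundary piece $\int_0^1(1-s)^{2\alpha}\,ds$ converges for the same reason ($2\alpha>-1$). For $C_2$ the logarithmic factors are slowly varying and do not change these exponents, so the same reasoning applies once one uses $g'(x)=x^{\alpha-1}(\alpha\ln x+1)$ for $g(x)=x^\alpha\ln x$ in the mean value estimate.

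Continuity on $(-1/2,1/2)$ I would derive from dominated convergence. Fixing a compact subinterval $[\alpha_0,\alpha_1]\subset(-1/2,1/2)$, the estimates above are uniform in $\alpha$: the $C_1$-integrand is dominated by a constant times $(1+s^{2\alpha_0})$ on $(0,1)$ and by a constant times $s^{2\alpha_1-2}$ on $[1,\infty)$, while the $C_2$-integrand admits an analogous dominating function carrying an extra $(\ln s)^2$ factor; both dominating functions are integrable. Since the integrands are continuous in $\alpha$ for each fixed $s$, dominated convergence yields continuity of $C_1$ and $C_2$.

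For the two integral formulas I would substitute $s=tu$ (so $ds=t\,du$) in the integrals over $\mathbb{R}$; since $t>0$ this gives $(t-s)_+=t(1-u)_+$ and $(-s)_+=t(-u)_+$, hence $(t-s)_+^\alpha-(-s)_+^\alpha=t^\alpha\bigl((1-u)_+^\alpha-(-u)_+^\alpha\bigr)$, and the first integral becomes $t^{2\alpha+1}\int_{\mathbb{R}}\bigl((1-u)_+^\alpha-(-u)_+^\alpha\bigr)^2\,du$. Splitting the $u$-integral at $u=0$ and $u=1$ (the integrand vanishes for $u>1$, equals $(1-u)^{2\alpha}$ on $(0,1)$, and equals $\bigl((1-u)^\alpha-(-u)^\alpha\bigr)^2$ on $(-\infty,0)$) and substituting $s=-u$ on the negative half-line recovers exactly $C_1(\alpha)$, giving the first identity. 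For the second integral the same substitution introduces an additive $\ln t$ in each logarithm, so the bracket becomes $\ln t\cdot A(u)+B(u)$ with $A(u)=(1-u)_+^\alpha-(-u)_+^\alpha$ and $B(u)=(1-u)_+^\alpha\ln(1-u)_+-(-u)_+^\alpha\ln(-u)_+$; applying $(a+b)^2\le 2a^2+2b^2$ and integrating, the $A^2$-piece reproduces $C_1(\alpha)$ and the $B^2$-piece reproduces $C_2(\alpha)$ by the identical splitting-and-reflection argument, which yields the claimed inequality.

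The main obstacle is not any single computation but the bookkeeping of boundary conventions and uniform integrability: one must fix that the term $x^\alpha\ln x$ is read as $0$ when its base vanishes (so the squared integrand is genuinely $0$ on $\{s>t\}$ and reduces correctly on $0<u<1$), confirm that the mean value estimates deliver the decay $s^{2\alpha-2}$, respectively $s^{2\alpha-2}(\ln s)^2$, needed at infinity uniformly over compact $\alpha$-intervals, and verify that the resulting dominating functions are integrable exactly on $(-1/2,1/2)$. Once these estimates are in hand, the scaling argument and the elementary inequality $(a+b)^2\le 2a^2+2b^2$ deliver both the identity and the bound without further difficulty.
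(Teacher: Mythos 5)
Your proposal is correct, and for the two displayed formulas it coincides with the paper's argument: the same scaling substitution $s=tu$ yields the identity for $C_1$, and the same decomposition of the logarithmic bracket into $\ln t\cdot A(u)+B(u)$ followed by $(a+b)^2\le 2a^2+2b^2$ yields the inequality with the factor $2$. Where you genuinely diverge is the continuity part. The paper dispatches continuity of $C_1$ by appealing to the (known) smoothness of the fBm normalizing constant $C(H)$ — note that $C_1(\alpha)=C(\alpha+\tfrac12)^{-2}$ since $\int_0^1(1-s)^{2\alpha}\,ds=\frac{1}{2\alpha+1}$ — and proves continuity of $C_2$ by a two-case analysis ($\alpha\in(0,1/2)$ versus $\alpha\in(-1/2,0)$) with dominating functions that themselves depend on $\varepsilon$, an explicit antiderivative computation for $\int_0^K (s^{\alpha+\varepsilon}\ln s)^2\,ds$, and an invocation of the \emph{extended} dominated convergence theorem (their Theorem on varying dominating functions). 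You instead treat $C_1$ and $C_2$ simultaneously by the ordinary dominated convergence theorem, constructing a single dominating function uniform over a compact subinterval $[\alpha_0,\alpha_1]\subset(-1/2,1/2)$ via mean-value estimates ($s^{2\alpha}\le s^{2\alpha_0}$ on $(0,1)$, decay $s^{2\alpha_1-2}$, respectively $s^{2\alpha_1-2}(\ln s)^2$, at infinity). Your route is more elementary and self-contained — it avoids the extended DCT, avoids the sign-of-$\alpha$ case split, and as a by-product proves the finiteness of $C_1,C_2$, which the paper merely asserts in the text preceding the lemma — at the modest cost of verifying that your bounds hold uniformly on compact $\alpha$-intervals, which you do correctly.
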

\begin{proof}
The continuity of $C_1$ comes from the continuity of $C$. We now consider the function $C_2$. Letting $\varepsilon \to 0$, it holds that 
$$|(1+s)^{\alpha + \varepsilon}\ln(1+s) - s^{\alpha + \varepsilon}\ln s| \to |(1+s)^{\alpha}\ln(1+s) - s^{\alpha}\ln s|.$$
Fix $\delta < \alpha$. We separate two cases.  \\
\textbf{The case $\alpha \in (0, 1/2)$.}  For $|\varepsilon| < \delta$ then
$$|(1+s)^{\alpha + \varepsilon}\ln (1+s) - s^{\alpha + \varepsilon} \ln s| \le |(1+s)^{\alpha + \delta} \ln (1+s) - s^{\alpha + \delta} \ln s|, \text{  if $s > 1$}.$$ 
If $0 < s \le 1$, since $s^{\alpha + \varepsilon}\ln s \to 0$ as $s$ tends to zero, we can find a constant such that $|(1+s)^{\alpha + \varepsilon}\ln (1+s) - s^{\alpha + \varepsilon} \ln s| \le c$. Thus, the dominated convergence theorem implies $C_2(\alpha + \varepsilon) \to C_2(\alpha)$.\\
\textbf{The case $\alpha \in (-1/2,0).$} Let $|\varepsilon| < \delta$, there exists $K$ big enough such that
$$|(1+s)^{\alpha + \varepsilon}\ln (1+s) - s^{\alpha + \varepsilon} \ln s| \le |(1+s)^{\alpha - \delta} \ln (1+s) - s^{\alpha - \delta} \ln s|, \text{ if $s > K$.}$$
If $s < K$, we estimate $|(1+s)^{\alpha + \varepsilon}\ln (1+s) - s^{\alpha + \varepsilon} \ln s| \le m + s^{\alpha + \varepsilon} \ln s$
for some $m>0$. Furthermore, we compute
\begin{align*}
\int_0^K{(s^{\alpha+\varepsilon} \ln s)^2}ds &= \left. \frac{s^{2(\alpha + \varepsilon) + 1} (\ln s)^2}{2(\alpha + \varepsilon) + 1}\right|_0^K \\
&- \left. \frac{1}{2(\alpha + \varepsilon) + 1} s^{2(\alpha + \varepsilon) + 1}\left( \frac{\ln s}{2(\alpha + \varepsilon) + 1} - \frac{1}{(2(\alpha + \varepsilon) + 1)^2}\right)\right|_0^K.
\end{align*}
Now, applying the extended dominated convergence theorem to the sequence $|(1+s)^{\alpha + \varepsilon}\ln (1+s) - s^{\alpha + \varepsilon} \ln s|^2$ which is smaller than
$$1_{s > K}|(1+s)^{\alpha - \delta} \ln (1+s) - s^{\alpha - \delta} \ln s|^2 + 1_{s \le K} (m^2 + s^{2(\alpha + \varepsilon)} \ln^2 s),$$ we obtain $C_2(\alpha + \varepsilon) \to C_2(\alpha)$.

Let us consider the first equality. By changing of variable $s = tx$, one has
\begin{align*}
\int\limits_{\mathbb{R}} { \left( (t-s)_+^{\alpha} - (-s)_+^{\alpha} \right)^2ds} &= \int\limits_0^{\infty} {\left( (t+s)^{\alpha} - s^{\alpha} \right)^2ds} + \int\limits_0^t {(t-s)^{2\alpha}ds}\\
&= t^{2\alpha + 1} \left( \int\limits_0^{\infty} {\left( (1+x)^{\alpha } - (x)^{\alpha} \right)^2ds} + \int_0^1{(1-x)^{2\alpha}dx} \right)
\end{align*}
and the first identity follows. Now, straightforward computations show that 
$$
(t+s)^{\alpha} \ln(t+s) - s^{\alpha}\ln s = t^{\alpha} \ln t \left[ (1 + x)^{\alpha} - x^{\alpha}  \right] + t^{\alpha}\left[ (1+x)^{\alpha}\ln(1+x) - x^{\alpha} \ln x \right]  
$$
and $(t -s)^{\alpha}\ln(t-s) = t^{\alpha}\ln t (1-x)^{\alpha} + t^{\alpha}(1-x)^{\alpha} \ln(1-x)$. 
Therefore, 
\begin{align*}
&\int_{\mathbb{R}}{\left( (t-s)_+^{\alpha} \ln (t-s)_+ - (-s)_+^{\alpha} \ln (-s)_+ \right)^2ds} = \int_0^t {\left( (t-s)^{\alpha} \ln (t -s)\right) ^2ds} \\
&+\int_0^{\infty}{\left( (t+s)^{\alpha} \ln (t+s) - s^{\alpha} \ln s \right)^2ds} \le 2t^{2\alpha + 1} (\ln t)^2 \int_0^1{(1-x)^{2\alpha}dx} \\
&+ 2t^{2\alpha +1} \int_0^1{\left[ (1-x)^{\alpha}\ln(1-x)\right] ^2dx} + 2 t^{2\alpha + 1}(\ln t)^2 \int_0^{\infty}{\left((1+x)^{\alpha} - x^{\alpha} \right)^2dx }\\
& + 2t^{2\alpha + 1}\int_0^{\infty}{\left( (1+x)^{\alpha} \ln(1+x) - x^{\alpha} \ln x \right)^2dx }.
\end{align*}
From that, the proof is complete.
\end{proof}

\begin{lemma}\label{lemma:diff_kernel} The derivative of the function $K^H$ is
%\begin{align}\label{diff_kernel} C'(H) &= -\frac{1}{2} \left(  \int\limits_0^{\infty} { \left( (1 + s)^{H - \frac{1}{2}} - s^{H - \frac{1}{2}} \right)^2ds} + \frac{1}{2H}\right)^{-\frac{3}{2}} \nonumber\\ &\times\left( \int\limits_0^{\infty} { 2\left( (1 + s)^{H - \frac{1}{2}} - s^{H - \frac{1}{2}} \right)\left((1 + s)^{H - \frac{1}{2}} \ln (1+s) - s^{H - \frac{1}{2}} \ln s \right) ds} -\frac{1}{2 H^2}\right) \end{align}

\begin{align}\label{diff_kernel_fBM}
\partial_H K^H(t,s) &= \partial_H C(H) \left( (t-s)_+^{H-\frac{1}{2}} - (-s)_+^{H-\frac{1}{2}} \right) \nonumber \\
&+ C(H) \left( (t-s)_+^{H-\frac{1}{2}} \ln (t-s)_+ - (-s)_+^{H-\frac{1}{2}} \ln (-s)_+ \right).
\end{align}
Furthermore, the derivative is square integrable over $\mathbb{R}$.
\end{lemma}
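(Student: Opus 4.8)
The plan is to treat $K^H(t,s)=C(H)\,g^H(t,s)$ as a product, where $g^H(t,s):=(t-s)_+^{H-1/2}-(-s)_+^{H-1/2}$, and to differentiate in $H$ by the product rule. First I would fix $t>0$ and $s\in\mathbb{R}$ with $s\neq 0,t$ (the two exceptional values form a Lebesgue-null set) and observe that, for such $s$, each of the quantities $(t-s)_+$ and $(-s)_+$ is either a fixed positive number or identically zero as $H$ varies. Using the elementary identity $\partial_H x^{H-1/2}=x^{H-1/2}\ln x$ for $x>0$, together with the convention that the $+$-power (and hence the accompanying logarithmic term) vanishes whenever its base is nonpositive, the $H$-derivative of $g^H$ is $(t-s)_+^{H-1/2}\ln(t-s)_+-(-s)_+^{H-1/2}\ln(-s)_+$. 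Combining this with $\partial_H C(H)$ via the product rule yields (\ref{diff_kernel_fBM}) directly.

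For the square integrability claim I would split $\partial_H K^H(t,s)$ into its two summands and estimate each separately in $L^2(\mathbb{R},ds)$. Setting $\alpha:=H-1/2\in(-1/2,1/2)$, the first summand is $\partial_H C(H)$ times $(t-s)_+^{\alpha}-(-s)_+^{\alpha}$, whose squared $L^2$-norm equals $t^{2\alpha+1}C_1(\alpha)$ by the first identity of Lemma \ref{lemma:compute_integral}; since $C_1(\alpha)$ is finite for $\alpha\in(-1/2,1/2)$, this term is square integrable. The second summand is $C(H)$ times $(t-s)_+^{\alpha}\ln(t-s)_+-(-s)_+^{\alpha}\ln(-s)_+$, and the second inequality of Lemma \ref{lemma:compute_integral} bounds its squared $L^2$-norm by $2t^{2\alpha+1}(\ln t)^2 C_1(\alpha)+2t^{2\alpha+1}C_2(\alpha)$, which is finite because both $C_1(\alpha)$ and $C_2(\alpha)$ are finite on $(-1/2,1/2)$. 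Since $|a+b|^2\le 2|a|^2+2|b|^2$, these two finite bounds together control $\int_{\mathbb{R}}|\partial_H K^H(t,s)|^2\,ds$, which is therefore finite.

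The computation of the pointwise derivative is routine; the only point that needs any care is the behaviour at the transition points $s=t$ and $s=0$, but since these form a null set they are irrelevant both to the almost-everywhere derivative formula and to the $L^2$-integral. I therefore do not expect a genuine obstacle here: the substance of the argument is simply to match the two summands of $\partial_H K^H$ to the two identities already proved in Lemma \ref{lemma:compute_integral}, noting that the logarithmic factors $\ln(1+x)$ and $\ln x$ defining $C_2(\alpha)$ are exactly what arises from differentiating the powers $(t-s)_+^{H-1/2}$ and $(-s)_+^{H-1/2}$ in $H$.
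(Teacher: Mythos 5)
Your proposal is correct and follows essentially the same route as the paper's (very terse) proof: differentiate $K^H=C(H)\,g^H$ in $H$ by the product rule using the smoothness of $C(H)$, then deduce square integrability from the two estimates of Lemma \ref{lemma:compute_integral} with $\alpha=H-\frac{1}{2}\in\left(-\frac{1}{2},\frac{1}{2}\right)$. Your added care about the convention at the transition points $s=0$ and $s=t$ and the explicit use of $|a+b|^2\le 2|a|^2+2|b|^2$ merely spell out details the paper leaves implicit.
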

\begin{proof}
Since the function $C(H)$ is smooth in $H$, we get the derivative of $K^H$ with respect to $H$. The second statement is deduced from Lemma \ref{lemma:compute_integral}.
\end{proof}
\begin{lemma}\label{lem:limit2}
As $\varepsilon$ goes to zero, the following convergence holds, for any $t\geq 0$,
\begin{equation}\label{diff_kernel_converge}
\int\limits_{\mathbb{R}} { \left| \frac{K^{H + \varepsilon}(t,s) - K^{H}(t,s)}{\varepsilon} -  \partial_H K^H(t,s) \right|^{2} ds} \to 0.
\end{equation}
In addition, the integral is bounded by 
$$\begin{cases}
c t^{2H + 1} ( 1 + (\ln t)^2 ) &\text{if}\ t\ge 1\\
c t^{2H -2\delta} ( 1 + (\ln t)^2 ) &\text{if}\ 0\le t < 1, |\varepsilon| < \delta < H,
\end{cases} $$
which is a function in $L^{\beta}(0,T)$ for $\beta > 1.$
\end{lemma}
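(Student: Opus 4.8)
The plan is to reduce both assertions to the explicit $L^2(ds)$ estimates of Lemma \ref{lemma:compute_integral}, after rewriting the difference quotient as a first-order Taylor remainder. Fix $t\ge 0$ and set $\alpha:=H-\frac12\in(-\frac12,\frac12)$. Choose $\delta$ small enough that $\delta<H$, $H+\delta<1$ and $\delta\le\frac12$, so that for $|\varepsilon|<\delta$ the exponents $\alpha':=H'-\frac12$ with $H'\in(H-\delta,H+\delta)$ stay in a compact subinterval of $(-\frac12,\frac12)$. For each fixed $s\notin\{0,t\}$ the map $H'\mapsto K^{H'}(t,s)$ is continuously differentiable, with derivative $\partial_H K^{H'}(t,s)$ given by \eqref{diff_kernel_fBM}, since $C(\cdot)$ and $x\mapsto x^{\alpha'}$ are smooth on the relevant sets. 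Hence I would write
\[
\frac{K^{H+\varepsilon}(t,s)-K^H(t,s)}{\varepsilon}-\partial_H K^H(t,s)=\frac{1}{\varepsilon}\int_0^{\varepsilon}\bigl(\partial_H K^{H+r}(t,s)-\partial_H K^H(t,s)\bigr)\,dr ,
\]
which vanishes pointwise in $s$ as $\varepsilon\to0$ by continuity of $H'\mapsto\partial_H K^{H'}(t,s)$.

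For the convergence \eqref{diff_kernel_converge} I would invoke dominated convergence in $s$. The integrand tends to $0$ pointwise, so it remains to dominate it, uniformly for $|\varepsilon|<\delta$, by a fixed element of $L^1(\mathbb{R})$. Bounding the remainder by $\sup_{|r|<\delta}|\partial_H K^{H+r}(t,s)-\partial_H K^H(t,s)|$ and using the triangle inequality, it suffices to majorize $|\partial_H K^{H'}(t,s)|^2$ for $H'\in(H-\delta,H+\delta)$. From \eqref{diff_kernel_fBM} this reduces to the two building blocks $\bigl((t-s)_+^{\alpha'}-(-s)_+^{\alpha'}\bigr)^2$ and $\bigl((t-s)_+^{\alpha'}\ln(t-s)_+-(-s)_+^{\alpha'}\ln(-s)_+\bigr)^2$; splitting $\mathbb{R}$ into $s<0$, $0\le s<t$ and $s>t$ and exploiting that the $\alpha'$ lie in a fixed compact subset of $(-\frac12,\frac12)$, one produces an $s$-integrable majorant independent of $\varepsilon$.

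For the explicit bound I would apply Minkowski's integral inequality in $L^2(ds)$ to the remainder identity, reducing matters to $\bigl(\int_{\mathbb{R}}|\partial_H K^{H'}(t,s)|^2\,ds\bigr)^{1/2}$ with $H'$ ranging over $(H-\delta,H+\delta)$. Inserting \eqref{diff_kernel_fBM}, using $(a+b)^2\le 2a^2+2b^2$, and applying the two formulas of Lemma \ref{lemma:compute_integral} with exponent $\alpha'=H'-\frac12$ (so that $2\alpha'+1=2H'$) gives
\[
\int_{\mathbb{R}}|\partial_H K^{H'}(t,s)|^2\,ds\le c\,t^{2H'}\bigl(1+(\ln t)^2\bigr),
\]
where $c$ is uniform over $H'\in[H-\delta,H+\delta]$ because $C(\cdot)$, $\partial_H C(\cdot)$, $C_1(\cdot)$ and $C_2(\cdot)$ are continuous, hence bounded, on the corresponding compact sets. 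I would then replace $t^{2H'}$ by its worst case over the interval: by $t^{2H+1}$ for $t\ge1$ (using $\delta\le\frac12$) and by $t^{2H-2\delta}$ for $0\le t<1$, which yields the two stated bounds. The $L^\beta(0,T)$ membership is immediate: on $[1,T]$ the majorant is continuous and bounded, while on $(0,1)$ one has $(2H-2\delta)\beta>0$ since $\delta<H$, so $t^{(2H-2\delta)\beta}(1+(\ln t)^2)^\beta$ is integrable near the origin.

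The main obstacle is the single $\varepsilon$-independent $L^1(\mathbb{R})$ majorant required by the dominated convergence step: near the right edge $s\to t^-$ the factor $(t-s)_+^{\alpha'}$ is singular when $\alpha'<0$ and $\ln(t-s)_+$ blows up, while as $s\to-\infty$ one must use the cancellation between the two terms of each block. The finiteness of $C_1$ and $C_2$ on a neighbourhood of $\alpha$ is exactly what controls these two regimes, but turning that finiteness into a majorant valid simultaneously for all nearby exponents $\alpha'$ is the delicate, though essentially routine, part of the argument.
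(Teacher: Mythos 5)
Your proposal is correct and follows the same skeleton as the paper's proof: pointwise convergence of the difference quotient via a first-order Taylor/mean-value representation, reduction of the integrability question to bounds on $\lvert\partial_H K^{H'}(t,s)\rvert^2$ for $H'$ near $H$, the two identities of Lemma \ref{lemma:compute_integral} with $\alpha'=H'-\tfrac12$ (so $2\alpha'+1=2H'$) for the quantitative estimate, continuity of $C(\cdot)$, $\partial_H C(\cdot)$, $C_1(\cdot)$, $C_2(\cdot)$ on compacts, and the worst-case replacement of $t^{2H'}$ by $t^{2H+1}$ for $t\ge 1$ and $t^{2H-2\delta}$ for $t<1$. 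The one genuine divergence is the limit-passage step. The paper uses the mean value theorem, which produces an intermediate parameter $\tilde\varepsilon(t,s)$ depending on $s$, so its dominating functions move with $\varepsilon$; it then invokes the \emph{extended} dominated convergence theorem (Theorem \ref{thm:extended_DCT}), needing only that the integrals of the dominators converge, and thereby never has to build a single $\varepsilon$-independent majorant. You instead take $\sup_{|r|<\delta}$ over the nearby exponents to get one fixed $L^1(ds)$ majorant and apply the ordinary dominated convergence theorem. Your route works — the region-by-region construction you sketch does go through, with $\delta<H$ (i.e.\ $\alpha-\delta>-\tfrac12$) controlling the singularity as $s\to t^-$ and $s\to 0^-$, and $H+\delta<1$ (i.e.\ $\alpha+\delta<\tfrac12$) giving, via the mean value theorem in the spatial variable, the decay $u^{2(\alpha+\delta)-2}(1+(\ln u)^2)$ as $s=-u\to-\infty$ — but you correctly identify this as the burden your choice incurs, and it is precisely the work the paper's extended-DCT device is designed to avoid; conversely, your version uses a more elementary convergence theorem and, via Minkowski's integral inequality, arguably handles the $s$-dependence of the intermediate point more transparently than the paper does. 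Your Minkowski step for the explicit bound is a clean substitute for the paper's $2(a^2+b^2)$ estimate and yields the identical bound, and your $L^\beta(0,T)$ argument (boundedness on $[1,T]$, $(2H-2\delta)\beta>0$ near the origin) matches what the paper leaves implicit.
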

\begin{proof}
First, we notice that 
$$ \lim_{\varepsilon \to 0} \left[  \frac{K^{H + \varepsilon}(t,s) - K^{H}(t,s)}{\varepsilon} -  \partial_H K^H(t,s) \right]  = 0, \qquad \text{for each $t,s$}.$$
By the mean value theorem, $ K^{H + \varepsilon}(t,s) - K^{H}(t,s) = \varepsilon \partial_H K^{H + \tilde{\varepsilon}}(t,s)$ for some $\tilde{\varepsilon} = \tilde{\varepsilon}(t,s).$ Thus, it is immediate that 
\begin{align*}
\left| \frac{K^{H + \varepsilon}(t,s) - K^{H}(t,s)}{\varepsilon} -  \partial_H K^H(t,s)\right| ^2 \le 2(|\partial_H K^{H + \tilde{\varepsilon}}(t,s)|^2 + |\partial_H K^{H }(t,s)|^2).
\end{align*}
Since $\partial_H K^{H + \tilde{\varepsilon}}(t,s) \to \partial_H K^{H}(t,s)$ for all $t, s$ when $\varepsilon \to 0$ and
\begin{equation*}
\lim_{\varepsilon \to 0} \int_{\mathbb{R}} {2(|\partial_H K^{H + \tilde{\varepsilon}}(t,s)|^2 + |\partial_H K^{H }(t,s)|^2) ds} = 4 \int_{\mathbb{R}} {|\partial_H K^{H }(t,s)|^2 ds} 
\end{equation*}
by the formula (\ref{diff_kernel_fBM}) and Lemma \ref{lemma:compute_integral}, an application of the extended dominated convergence theorem  implies that (\ref{diff_kernel_converge}) holds true.  
We also estimate 
\begin{align*}
&\int_{\mathbb{R}}{ |\partial_{H}K^{H + \tilde{\varepsilon}}(t,s)|^2 + |\partial_{H}K^{H}(t,s)|^2 ds} \le 2 (\partial_H C(H + \tilde{\varepsilon}))^2 C_1(H - 1/2 + \tilde{\varepsilon}) t^{2H + 2 \tilde{\varepsilon}} \\
&+ 4C^2(H+\tilde{\varepsilon}) \left[ C_1(H - 1/2 + \tilde{\varepsilon}) t^{2H + 2 \tilde{\varepsilon}}(\ln t)^2 + C_2(H - 1/2 + \tilde{\varepsilon}) t^{2H + 2 \tilde{\varepsilon}}\right] \\
&+  2 (\partial_H C(H))^2 C_1(H - 1/2) t^{2H} 
+ 4C^2(H) \left[ C_1(H - 1/2) t^{2H}(\ln t)^2 + C_2(H - 1/2 ) t^{2H}\right].
\end{align*}
Since the functions $C(\cdot),\partial_H C(\cdot), C_1(\cdot), C_2(\cdot)$ are continuous with respect to $\alpha$, there exists $\delta > 0$ small enough, such that for all $\varepsilon \in (-\delta, \delta)$, one has
$$
\int\limits_{\mathbb{R}} { \left| \frac{K^{H + \varepsilon}(t,s) - K^{H}(t,s)}{\varepsilon} -  \partial_H K^H(t,s) \right|^{2} ds} \le 
\begin{cases}
c t^{2H + 1} ( 1 + (\ln t)^2 ) &\text{if}\ t\ge 1\\
c t^{2H -2\delta} ( 1 + (\ln t)^2 ) &\text{if}\ 0\le t < 1.
\end{cases}
$$
%$$ \int\limits_{\mathbb{R}} { \left| \frac{K^{H + \varepsilon}(t,s) - K^{H}(t,s)}{\varepsilon} -  \partial_H K^H(t,s) \right|^{2} ds} \le c t^{2H + 1} ( 1 + (\ln t)^2 ),$$
with some constant $c = c(H, \delta) >0$.
\end{proof}

\begin{theorem}[The extended dominated convergence theorem]\label{thm:extended_DCT}
Let $(\Omega, \mathcal{F}, \mu)$ be a measure space and let $f_n, g_n : \Omega \to \mathbb{R}$ be measurable functions such that $|f_n| \le g_n, \text{ } a.e.$ for all $n \ge 1$.  Suppose that
\begin{itemize}
\item $g_n \to g, a.e$ and $f_n \to f, a.e.$
\item $g_n, g \in L^1(\Omega)$ and $\int {|g_n| d\mu} \to \int {|g| d\mu}$ as $n \to \infty.$
\end{itemize}
Then $f \in L^1(\Omega)$,
$$ \lim_{n \to \infty} \int{f_nd\mu} = \int{f d\mu}, \qquad \text{and} \qquad \lim_{n \to \infty} \int{|f_n - f|d\mu} = 0.$$
\end{theorem}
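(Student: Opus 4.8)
The plan is to reduce everything to Fatou's lemma applied to two cleverly chosen nonnegative sequences. First I would record the pointwise consequence that $|f|\le g$ a.e.: since $|f_n|\le g_n$ a.e.\ and both $f_n\to f$, $g_n\to g$ a.e., passing to the limit preserves the inequality and gives $|f|\le g$ a.e., and as $g\in L^1(\Omega)$ this already yields $f\in L^1(\Omega)$. Note also that $g_n\ge|f_n|\ge 0$ and $g\ge 0$ a.e., so $\int|g_n|\,d\mu=\int g_n\,d\mu$ and the hypothesis reads $\int g_n\,d\mu\to\int g\,d\mu<\infty$.

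The heart of the argument is to apply Fatou's lemma to the two nonnegative sequences $g_n+f_n$ and $g_n-f_n$, both of which are $\ge 0$ because $|f_n|\le g_n$. For $g_n+f_n$ I obtain
$$\int(g+f)\,d\mu\le\liminf_{n}\int(g_n+f_n)\,d\mu=\int g\,d\mu+\liminf_n\int f_n\,d\mu,$$
where I used $\int g_n\,d\mu\to\int g\,d\mu$ to split the $\liminf$; since $\int g\,d\mu$ is finite it may be cancelled, leaving $\int f\,d\mu\le\liminf_n\int f_n\,d\mu$. Applying Fatou to $g_n-f_n$ in the same way gives $\int(g-f)\,d\mu\le\int g\,d\mu-\limsup_n\int f_n\,d\mu$, that is, $\limsup_n\int f_n\,d\mu\le\int f\,d\mu$. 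Combining the two inequalities yields $\lim_n\int f_n\,d\mu=\int f\,d\mu$.

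For the $L^1$-convergence I would apply the statement just established to the sequence $h_n:=|f_n-f|$. These are dominated by $G_n:=g_n+g$, since $|f_n-f|\le|f_n|+|f|\le g_n+g$; moreover $G_n\to 2g=:G$ a.e., $h_n\to 0$ a.e., and $\int G_n\,d\mu=\int g_n\,d\mu+\int g\,d\mu\to 2\int g\,d\mu=\int G\,d\mu$. Hence the hypotheses of the theorem hold for the pair $(h_n,G_n)$, and the convergence-of-integrals conclusion gives $\lim_n\int|f_n-f|\,d\mu=\int 0\,d\mu=0$.

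The only delicate point is the manipulation of $\liminf$ and $\limsup$ when splitting the integrals of $g_n\pm f_n$: this is precisely where the weakened hypothesis $\int g_n\,d\mu\to\int g\,d\mu$ (in place of a fixed dominating function) is used, and where finiteness of $\int g\,d\mu$ is essential, so that the common term $\int g\,d\mu$ can be cancelled without producing an $\infty-\infty$ indeterminacy. Everything else is a routine invocation of Fatou's lemma.
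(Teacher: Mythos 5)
Your proof is correct. Note that the paper itself does not prove this theorem at all: its ``proof'' is a one-line citation to Theorem 2.3.11 of \cite{athreya2006measure}, so there is no in-paper argument to compare against. What you have written is the standard Fatou-based proof of the generalized (Pratt-type) dominated convergence theorem, which is essentially the argument found in the cited reference: Fatou applied to the nonnegative sequences $g_n+f_n$ and $g_n-f_n$, with the observation that $\liminf_n(a_n+b_n)=a+\liminf_n b_n$ whenever $a_n\to a$ finite, and finiteness of $\int g\,d\mu$ justifying the cancellation. Your handling of the two delicate points is right: $g_n\ge |f_n|\ge 0$ a.e.\ lets you replace $\int|g_n|\,d\mu$ by $\int g_n\,d\mu$, and $|f|\le g\in L^1$ gives $f\in L^1$ so that all the integrals $\int f_n\,d\mu$, $\int f\,d\mu$ are finite before you subtract them. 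The self-bootstrap for the $L^1$ conclusion --- applying the already-proved convergence-of-integrals statement to $h_n=|f_n-f|$ with dominating sequence $G_n=g_n+g\to 2g$ and $\int G_n\,d\mu\to\int 2g\,d\mu$ --- is clean and valid (one could equally apply Fatou once more to $G_n-h_n$ directly, but your reduction is tidier). No gaps.
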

\begin{proof}
See Theorem 2.3.11 of \cite{athreya2006measure}.
\end{proof}
\bibliography{memory}
\end{document}